\tikzstyle{vertex}=[circle, draw, inner sep=1pt, minimum size=15pt]
\tikzstyle{block} = [draw, rectangle, 
\tikzset{>={Latex[width=2mm,length=2mm]}}
\newtheorem{definition}{\sc {Definition}}[section]
\newtheorem{theorem}{\sc {Theorem}}[section]
\newtheorem{lemma}{\sc {Lemma}}[section]
\newtheorem{corollary}{\sc {Corollary}}[section]
\newtheorem{example}{\sc {Example}}
\newenvironment{proof}[1][\bf{Proof.}]{\begin{trivlist}
\item[\hskip \labelsep {\bfseries #1}]}{\end{trivlist}}
\newenvironment{remark}[1][\bf{Remark.}]{\begin{trivlist}
\item[\hskip \labelsep {\bfseries #1}]}{\end{trivlist}}
\newcommand{\qed}{\hfill \mbox{\rule{2mm}{2mm}} \vspace{0.5cm}}
\begin{document}

\title{On Variants of Network Flow Stability}
\author{Young-San Lin \footnote{Computer Science Department, Purdue University, e-mail: {lin532@purdue.com}}  \and Th\`anh Nguyen \footnote{Krannert School of Management, Purdue University, e-mail: {nguye161@purdue.edu}}}
 \date{}
\maketitle
\onehalfspacing
\begin{abstract}
We consider a general stable flow problem in a directed and capacitated network, where each vertex has a strict preference list over the incoming and outgoing edges. A flow is stable if no group of vertices forming a path can mutually benefit by rerouting the flow. Motivated by applications in supply chain networks, we  generalize the traditional Kirchhoff's law, requiring the outflow is equal to the inflow at every nonterminal node,  to a monotone piecewise linear relationship between the inflows and the outflows. We show the existence of a stable flow using Scarf's Lemma, and provide a polynomial time algorithm to find such a stable flow. We further show that  finding a minimum cost  generalized stable network is NP-hard, while the problem is polynomial time solvable for 
the traditional stable flow satisfying Kirchhoff's law.

\end{abstract}


\section{Introduction}
In the classic stable marriage problem,  men and  women with individual preference orders of the opposite gender, are to be matched such that there are no man-woman pairs who are inclined to abandon their original partners and marry each other. \cite{gale1962college} showed the existence of such a stable matching by the deferred acceptance (DA) algorithm. Since then, the stable marriage problem and the DA algorithm have become  the cornerstones of market design and have changed the organization of many centralized markets, including resident matching, school choice and kidney exchange. (See for example, \cite{roth-peranson,roth2005kidney,Atila2003}.)

Following the success of the stable marriage problem, stability in supply chain networks, a generalization of the stable marriage problem from a two-sided market to a network market, has been studied extensively. See for example, \cite{Ostrovsky,fleiner2010stable,hatfield2015chain}. One natural version of the problem, first considered by \cite{fleiner2010stable}, is modeled as a directed graph whose vertices and edges represent agents and bilateral contracts, respectively. The outgoing endpoint vertex of an edge is the seller of that contract while the incoming endpoint vertex is the buyer. Each edge of the network has a capacity, representing the maximum amount of goods that can be traded in the contract. It is further assumed that each vertex holds a preference list over the adjacent contracts. A flow is a collection of bilateral contracts satisfying Kirchhoff's law, that is, the sum of inflow contracts is the same as the sum of outflow contracts for every vertex, except a source and a sink vertex representing a producer and a consumer, respectively. A flow is stable if no group of agents forming a path can benefit more by cooperating among themselves.  

Stable network flow has a variety of applications in physical and Internet traffic networks \cite{haxell2008fractional} as well as  supply chain and manufacturing networks \cite{fleiner2010stable}.  However,  because of the Kirchhoff's law, many applications do not fit this model.  For example, the amount of goods that an intermediate  firm receives is not equal to the amount the firm sells because during transportation the goods might get damaged and lost.  In the case of manufacturing firms, the relationship between input and output varies depending on the products and the manufacturing technologies.\footnote{See   \cite{flowbook} for other applications of general flow constraints.}

Motivated by this, we extend \cite{fleiner2010stable} by considering monotone and piecewise linear relationship between inflows and outflows at every node of the network. While the generalized flow problems have been studied extensively from the ``cardinal'' optimization point of view, to the best of our knowledge, there is essentially no prior work studying the problems with ordinal preferences.  

Our results are based on a series of reductions. As shown in \cite{fleiner2010stable}, the traditional stable flow problem can be reduced to the stable allocation problem. We show that the generalized stable flow problem can be eventually reduced to the generalized stable allocation (GSA) problem (\cite{dean2009generalized}). The whole reduction is non-trivial. We start from reducing this problem to a stable solution of Scarf's Lemma (\cite{scarf1967core}). However, finding a stable solution in general framework of Scarf is PPAD-hard (\cite{kintali}). Our next contribution is to provide another reduction from the Scarf's instance to GSA. We further study the optimization version of the problem and show that  finding a minimum cost  generalized stable network is NP-hard, while the problem is polynomial time solvable for  the traditional stable flow satisfying Kirchhoff's law.

Our contributions are thus twofold.   First, the generalized problem we introduce and our algorithm apply to  a wide range of applications in supply chains networks. Second, our hardness result shows that  generalizing Kirchhoff's law is not just a technical transformation between network flows, it significantly  changes the  problem's computational complexity.

The paper is organized  as follow. After discussing related work, in section~\ref{sec2}, we describe the setting and background definitions of this problem including how flow and stability are defined when agents utilize some special mappings to their inflow and outflow. Specifically, we introduce the concept of piecewise linear (PL) mapping, truncated linear (TL) mapping, and linear mapping.

In section~\ref{sec3}, we show the existence of stable flow in a truncated linear network (TL-network) where inflow and outflow has a TL relation for all agents by a reduction to Scarf's Lemma (see \cite{scarf1967core}). Later on, by reducing a piecewise linear network (PL-network) where inflow and outflow has a PL relation for all agents to another TL-network, we can show the existence of stable flow in a PL-network.


In section~\ref{sec4}, we show the entire framework from reducing TL-network flow stability to GSA. The main approach is to first map the TL-network instance to a Scarf's instance mentioned in section~\ref{sec3}, then reduce the Scarf's instance to a linear network with three layers, eventually reduce the new network to GSA. As a PL-network instance can be reduced to a TL-network instance, the entire problem for finding stable flow in a PL-network is enclosed. 


%
In section~\ref{sec5}, we consider an optimization variant of the stable flow problem where edges have associated cost. For the traditional problem, finding stable flow with minimum cost is polynomial time solvable. However, finding the minimum cost  a stable flow in AL-networks, TL-networks, and PL-networks are all NP-hard.

\subsubsection*{Related Work}
As discussed above, our paper is closely related to \cite{fleiner2010stable}, which provides a reduction of the stable flow to the stable allocation problem considered in \cite{baiou2002stable}. The reduction, therefore,  shows the existence of a stable flow. \cite{cseh2013stable} provide  a preflow-push variant of the Gale-Shapley algorithm to find a stable flow  in pseudo-polynomial time. Shortly afterward,  \cite{cseh2013new} used a path augmenting variant of the Gale-Shapley algorithm to improve the running time to be polynomial. To the best of our knowledge, there have not been existence results nor polynomial time algorithms to find a stable solution for the generalized stable flow problem considered in this paper.


There is an extensive optimization literature on generalized flow problem, that assume the total outflow is equal to a constant times the total inflow. (See for example, \cite{kantorovich1960mathematical,flowbook} and recent work of \cite{vegh}). This literature however mainly focused on the ``cardinal'' problem, that is, to find the maximum total  flow. Our paper provides a first step to investigate ``ordinal'' preferences in the generalized flow problem.



Our paper is also related to a growing economic literature on network stability problem. \cite{ostrovsky2008stability} is the first to introduce the concept of stability to supply chain networks, and followed by, among others, \cite{hatfield2013stability, hatfield2015chain} and \cite{fleiner2016trading}. However,  these papers focus on discrete choice function and integral stable solutions. In particular, in these models each edge in the network corresponds to a trade and an outcome is a set of trades. In our paper each trade has a continuous capacity, and the  difficulties  are the non convex constraints between the inflow and outflow for each vertex.  
 
Recently, \cite{che2015stable,azevedo2015existence} consider continuous model of matching in {\em two-sided markets}, and show the existence of a stable solution. However, these results are based on fixed-point theorem argument and polynomial time algorithms to find such a stable solution are thus not known.

\section{Preliminaries} \label{sec2}


\subsection{Stable Flow}
We start with traditional stable flow introduced in \cite{fleiner2010stable}.  
A network is a quadruple $(G,s,t,c)$, where $G=(V\cup\{s,t\},E)$ is a simple digraph. $s$ and $t$ are the source and sink vertices and $c: E \to \mathbb{R}_+$ determines the capacity $c(e)$ for  $e \in E$. $s$ only has outgoing edges while $t$ only has incoming edges. For each $v \in V$, there is a path from $s$ to $v$ and a path from $v$ to $t$. The strict preference $\succ_v$ of a vertex $v \in V$ is defined over edges adjacent to $v$. Incoming edges and outgoing edges are ranked strictly and separately. $e_1 \succ_v e_2$ means $v$ prefers $e_1$ to $e_2$.

A {\em flow} of network $(G,s,t,c)$ is a function $f: E \to \mathbb{R}_{\geqslant 0}$ such that capcity condition $0 \leqslant f(e) \leqslant c(e)$ holds for each $e \in E$ and each vertex $v \in V$ satisfies Kirchhoff's Law: $\sum_{uv \in E}f(uv) = \sum_{vu \in E}f(vu)$, that is, the amount of inflow equals the amount of outflow.

Let $f$ be a flow, a walk $W=(v_1, v_2, ..., v_k)$ is {\em blocking} $f$ if all the followings hold:
\begin{enumerate}
\item $v_i v_{i+1}$ is unsaturated for $i=1, ..., k-1$.
\item $v_1 = s$ or there is an edge $v_1 u \in E$ such that $v_1 v_2 \succ_{v_1} v_1 u$ and $f(v_1 u) > 0$.
\item $v_k = t$ or there is an edge $u v_k \in E$ such that $v_{k-1} v_k \succ_{v_k} u v_k$ and $f(u v_k) > 0$.
\end{enumerate}

An edge $e$ is {\em unsaturated} if $f(e) < c(e)$. A flow $f$ is {\em stable} if there is no blocking walk. Besides, while defining flow stability, we also assume that for any vertex in $V$, routing flow on any of the adjacent edges are preferred over no routing.

Network flow stability is a general model capturing the classic stable marriage problem as a sub case.  To see this,  consider the classic stable marriage problem with a set $\mathcal{M}$ of $n$ men and a set $\mathcal{W}$ of $n$ women with individual strict preference of the opposite gender. We also assume that each man or woman rather have a partner than be unmatched. In a matching $M$, we denote $M(m)$ as the partner of $m$ for $m \in \mathcal{M}$ and $M(w)$ as the partner of $w$ for $w \in \mathcal{W}$. A pair $(m,w)$ is blocking $M$ if all the followings hold:

\begin{enumerate}
\item $m$ and $w$ are not matched in $M$.
\item $m$ is unmatched in $M$ or $m$ prefers $w$ to $M(m)$.
\item $w$ is unmatched in $M$ or $w$ prefers $m$ to $M(w)$.
\end{enumerate}

We can construct a network by attaching $s$ to men vertices, men vertices to women vertices, and women vertices to $t$, by unit capacity edges. For preferences, each $m \in \mathcal{M}$ prefers $w_1$ to $w_2$ if and only if $mw_1 \succ_{m} mw_2$ and each $w \in \mathcal{W}$ prefers $m_1$ to $m_2$ if and only if $m_1w \succ_{w} m_2w$. Herein a blocking pair corresponds to a blocking walk in the network. Example~\ref{ex1} shows how network flow stability can be applied to the stable marriage problem.

\begin{example} \label{ex1}
\mbox{}\\
\begin{center}
\begin{tikzpicture}
	\node (M) at (0,5.5) {$\mathcal{M}$};
	\node (W) at (4,5.5) {$\mathcal{W}$};
	\node[vertex] (s) at (-2,3) {$s$};
	\node[vertex] (t) at (6,3) {$t$};
	\node[vertex] (m1) at (0,4.5) {$m_1$};
	\node[vertex] (m2) at (0,3) {$m_2$};
	\node[vertex] (m3) at (0,1.5) {$m_3$};
	\node[vertex] (w1) at (4,4.5) {$w_1$};
	\node[vertex] (w2) at (4,3) {$w_2$};
	\node[vertex] (w3) at (4,1.5) {$w_3$};
	\path[->]
		(s) edge [dashed] (m1)
		(s) edge (m2)
		(s) edge (m3)
		(m1) edge [dashed] (w1)
		(m2) edge [dashed] node [above, pos=0.2] {$1^{\mbox{st}}$} (w1)
		(m2) edge node [pos=0.2] {$3^{\mbox{rd}}$} (w2)
		(m2) edge [dashed] node [below, pos=0.2] {$2^{\mbox{nd}}$} node [above, pos=0.8] {$1^{\mbox{st}}$} (w3)
		(m3) edge node [below, pos=0.8] {$2^{\mbox{nd}}$} (w3)
		(w1) edge [dashed] (t)
		(w2) edge (t)
		(w3) edge (t);
\end{tikzpicture}
\end{center}
This network corresponds to a stable marriage matching instance. $m_2 w_1 \succ_{m_2} m_2 w_3 \succ_{m_2} m_2 w_2$ and $m_2 w_3 \succ_{w_3} m_3 w_3$. Although $\mathcal{M}$ and $\mathcal{W}$ form a complete bitartite graph, we highlight the edges with flow value 1 by solid edges while other edges have flow value 0. The dashed edges form blocking walks. The flow corresponds to the matching $\{(m_2,w_2),(m_3,w_3)\}$. The blocking pair $(m_1, w_1)$ corresponds to the blocking walk $(s, m_1, w_1, t)$; the blocking pair $(m_2, w_1)$ corresponds to the blocking walk $(m_2, w_1, t)$; while the blocking pair $(m_2, w_3)$ corresponds to the blocking walk $(m_2, w_3)$.
\end{example}

Network flow stability not only can be applied to stability in two sided markets, but also to stability in multi-level markets or supply chain networks. It is convenient to depict these settings as a digraph, where each capacitated edge $e \in E$ represents a contract with limited product quantity, and each vertex $v \in V$ represents an intermediate agent that holds its individual preference of incoming and outgoing contracts and strives to maximize the amount of flow through $v$. A blocking walk of a flow interprets a scenario in which a group of agents are willing to cooperate selfishly and benifit from rerouting some flow. A flow is stable if every agent is satisfied to its current offer such that no group of agents have the incentive to reroute the flow since they cannot benifit more.

\subsection{Generalized Stable Flow}
To define a general network flow, we use the same notation as the traditional stable flow problem, except we consider some special classes of mappings for vertices that we use to model the constraint between inflow and outflow. In our paper, a mapping is more general than a function because it maps an element of $\mathbb{R}_{\geqslant 0}$ to an element or an interval of $\mathbb{R}_{\geqslant 0}$. We are interested in linear, truncated linear and piecewise linear mappings. They are defined as follows.  


\begin{definition} \label{def2.1}
$g$ is a {\bf linear mapping} if  $g(x) = a x$  for  $a>0$. $g$ is a {\bf truncated linear mapping} if it is one of the following: 
\begin{equation} \label{eq1} 
g(x) = 
\begin{cases} 
      [0,b] & \text{if } x=0, \\
      a x + b & \text{otherwise.} \\
   \end{cases}
\end{equation}
		
\begin{equation} \label{eq2} g(x) = 
\begin{cases} 
      0 & \text{if } x<\frac{b}{a}, \\
      a x - b & \text{otherwise.} \\
   \end{cases}
			\end{equation}
where $a > 0$ and $b \geqslant 0$.
\end{definition}

The following figure illustrates the two possible cases of a truncated linear mapping.
		\begin{center}
    \begin{tikzpicture}
      \draw[->] (-0.2,0) -- (2,0) node[right] {$x$};
      \draw[->] (0,-0.2) -- (0,2) node[above] {$g(x)$}; 
      \node at (-0.2,0.5) {$b$};
      \node at (1.3,1) {$a x + b$};
\draw[scale=0.5,domain=0:1,smooth,variable=\x,blue] plot (0,{\x}) [scale=0.5,domain=0:6,smooth,variable=\x,blue] plot({\x},{\x+2});

      \draw[->] (4.8,0) -- (7,0) node[right] {$x$};
      \draw[->] (5,-0.2) -- (5,2) node[above] {$g(x)$};
 	  \node at (5.5,0.4) {$\frac{b}{a}$};
      \node at (7.6,1) {$a x - b$};
 \draw[scale=0.5,domain=10:11,smooth,variable=\x,blue] plot ({\x},0) [scale=0.5,domain=11:16,smooth,variable=\x,blue] plot({\x+11},{\x-11});
    \end{tikzpicture}
		\end{center}

\begin{definition} \label{def2.2}
A {\bf piecewise linear mapping} concatenates $k$ segments of truncated linear mappings, namely:
    \begin{equation} \label{eq3} g(x) = 
\begin{cases} 
      g_{1}(x) & \text{if } x \in [c_0, c_1] \\
      \max\{g_{1}(c_1)\} + g_{2}(x-c_1) & \text{if } x \in [c_1, c_2], \\
			\max\{g_{1}(c_1)\} + \max\{g_{2}(c_2 - c_1)\} + g_{3}(x - c_2) & \text{if } x \in [c_2, c_3], \\
      \vdots \\
      \sum_{i=1}^{k-1}{\max\{g_{i}(c_i - c_{i-1})\}} + g_{k}(x-c_{k-1}) & \text{if } x \in [c_{k-1}, \infty).
   \end{cases}
		\end{equation}
where $c_0 = 0$, $c_0 \leqslant c_1 \leqslant ... \leqslant c_{k-1}$, and $g_{i}$ is a truncated linear  mapping for $i=1, 2, ..., k$. 
\end{definition}


The following figure is an example of piecewise linear mapping.

\begin{center}
   \begin{tikzpicture}
      \draw[->] (-0.7,0) -- (5,0) node[right] {$x$};
      \draw[->] (-0.5,-0.2) -- (-0.5,5) node[above] {$g(x)$};
			\node at (-0.2,-0.2) {$c_0$};
      \node at (0.5,-0.2) {$c_1$};
      \node at (1.5,-0.2) {$c_2$};
      \node at (3,-0.2) {$c_3 = c_4$};
      \node at (0,4.7) {$g_{1}$};
      \node at (1,4.7) {$g_{2}$};
      \node at (2.25,4.7) {$g_{3}$};
      \node at (3,5) {$g_{4}$};
      \node at (4,4.7) {$g_{5}$};
      \draw[domain=0:0.5,smooth,variable=\x,blue] plot (-0.5,{\x}) [domain=-0.5:0.5,smooth,variable=\x,blue] plot({\x},{0.25*(\x+0.5)+0.5});
    \draw[domain=0.5:1.5,smooth,variable=\x,blue] plot({\x},{1.2*(\x-0.5)+0.75});
    \draw[domain=1.5:2.2,smooth,variable=\x,blue] plot({\x},{1.95})
[domain=2.2:3,smooth,variable=\x,blue] plot({\x},{0.7*(\x-2.2)+1.95});
    \draw[domain=2.51:3,smooth,variable=\x,blue] plot(3,{\x})
[domain=3:4,smooth,variable=\x,blue] plot({\x},{3})
[domain=4:5,smooth,variable=\x,blue] plot({\x},{\x-1});
	  \draw[domain=0:5,dotted,variable=\x,red] plot (0.5,{\x});
      \draw[domain=0:5,dotted,variable=\x,red] plot (1.5,{\x});
			\draw[domain=0:5,dotted,variable=\x,red] plot (3,{\x});
    \end{tikzpicture}
\end{center}

Note that $g_2(x)$ is a linear mapping while $g_4(x)$ maps to only an interval of $\mathbb{R}_{\geqslant 0}$.


In a network $(G,s,t,c)$, for each $v \in V$, $v$ is associated with a mapping $g_v$ where $x$ is the inflow of $v$ and $g_v(x)$ is the outflow of $v$. It is convenient to introduce the following definition assuming that each vertex in a network all apply some specific mappings.

\begin{definition} \label{def2.3}
If for all $v \in V$, $g_v$ is a truncated linear mapping, then the network is a {\bf truncated linear network} (TL-network). If for all $v \in V$, $g_v$ is a piecewise linear mapping, then the network is a {\bf piecewise linear network} (PL-network).
\end{definition}

In a market network, each vertex can be regarded as an agent given offers of incoming and outgoing contracts. They evaluate the quantity of desired outgoing contracts to be signed based on how many incoming contracts are accepted. Therefore, the feasibility of contract assignment can be defined as the following:

\begin{definition} \label{def2.4}
Given a flow $f$ of a TL-network or PL-network, for each $v \in V$, let $f_{in}(v) = \sum_{uv \in E}f(uv)$ and $f_{out}(v) = \sum_{vu \in E}f(vu)$, $f$ is {\bf feasible} if:
\begin{enumerate}
\item $0 \leqslant f(e) \leqslant c(e)$ for each $e \in E$.
\item For each $v \in V$, $g_v(f_{in}(v)) = f_{out}(v)$ if $g_v(f_{in}(v))$ maps $f_{in}(v)$ to an element of $\mathbb{R}_{\geqslant 0}$; $g_v(f_{in}(v)) \in f_{out}(v)$ if $g_v(f_{in}(v))$ maps $f_{in}(v)$ to an interval of $\mathbb{R}_{\geqslant 0}$.
\end{enumerate}
\end{definition}

The definition of flow stability is defined in the similar way to the traditional definition:

\begin{definition} \label{def2.5}
Given a flow $f$ of a TL-network or PL-network, $f$ is stable if it is feasible and there is no blocking walk in the network. $f$ has a blocking walk $W = (v_1, v_2, ..., v_{k-1}, v_k)$ where $v_i v_{i+1} \in E$ for $i=1, ..., k-1$ if all the followings hold:

\begin{enumerate}
\item There exists a vector $V_W = (r_1, r_2, ..., r_{k-1})$ such that:

\begin{enumerate}
\item  $r_i \geqslant 0$ and there is at least one $r_i > 0$ for $i=1, ..., k-1$.
\item $r_i \leqslant c(v_iv_{i+1}) - f(v_iv_{i+1})$ for $i=1, ..., k-1$.
\item For $i=2, ..., k-1$, $f_{out}(v_i)+r_i = g_{v_i}(f_{in}(v_i)+r_{i-1})$ if $g_{v_i}$ maps $f_{in}(v_i)+r_{i-1}$ to an element of $\mathbb{R}_{\geqslant 0}$; $f_{out}(v_i)+r_i \in g_{v_i}(f_{in}(v_i)+r_{i-1})$ if $g(f_{in}(v))$ maps to an interval of $\mathbb{R}_{\geqslant 0}$.
\end{enumerate}
\item $v_1 = s$ or there is an edge $v_1 u \in E$ such that $v_1 v_2 \succ_{v_1} v_1 u$ and $f(v_1 u) > 0$.
\item $v_k = t$ or there is an edge $u v_k \in E$ such that $v_{k-1} v_k \succ_{v_k} u v_k$ and $f(u v_k) > 0$.
\end{enumerate}
\end{definition}

All the conditions in Definition~\ref{def2.5} are the same as the ones for the traditional stable flow, except the first condition has to be modified because the flow no longer satisfies Kirchhoff's Law. Point 1.(a) shows that in the blocking walk $W$, at least one agent $v_i$ has the incentive to deliver positive flow. Point 1.(b) restricts the flow value that is intended to be delivered along $W$ by the remaining capacity, while point 1.(c) checks the feasibility. In other words, $W$ is blocking $f$ if vertices in $W$ are better off by rerouting a feasible flow bounded by the remaining capacity. $f$ is stable if no group of vertices can benifit from rerouting the flow. For an example of stable and unstable flow in a TL-network, see Example~\ref{ex2} in the appendix.

We denote {\em TL-stable-flow} (TL-SF) a stable flow assignment in a TL-network, or the problem of finding a stable flow in a TL-network, and {\em PL-stable-flow} (PL-SF) a stable flow assignment in a PL-network, or the problem of finding stable flow in PL-networks.


\section{Flow Stability of TL-networks and PL-networks} \label{sec3}


Scarf's Lemma originally appeared as a tool to prove the non-emptiness of the core in an $n$ person game (\cite{scarf1967core}). In this section, we show the existence of TL-SF by a reduction to Scarf's Lemma. The existence of PL-SF, on the other hand, is shown by a reduction to the existence of TL-SF.

\subsection{Scarf's Lemma}

\begin{definition} \label{def3.1} Let $A$ be an $m \times n$ nonnegative matrix with at least one positive entry in every column and row, $b \in \mathbb{R}^m_+$ be a positive vector, and $\mathcal{P} = \{x: x \geqslant 0, Ax \leqslant b\}$. For each row $i$ of $A$, there is a strict ranking $\succ_i$ over the columns in $\{1 \leqslant j \leqslant n: A_{ij} > 0\}$. $k \succ_i j$ means row $i$ prefers column $k$ to column $j$.

We say $x \in \mathcal{P}$ {\bf dominates} column $j$ if there exists a row $i$ such that:
\begin{enumerate}
\item $A_{ij} > 0$ and the constraint $i$ binds, i.e. $(Ax)_i = b_i$.
\item $k \succ_i j$ for any other $k \neq j$ such that $A_{ik} > 0$ and $x_k > 0$.
\end{enumerate}
\end{definition}

To simplify our notation, given an $x \in \mathcal{P}$, we also say row $i$ {\it dominates} column $j$ if the above mentioned conditions hold.

\begin{lemma}[Scarf's Lemma] \label{lem3.1} For any above mentioned $A$, $b$, and $\succ_i$, there exists an $x^* \in \mathcal{P}$ that dominates all columns of $A$.
\end{lemma}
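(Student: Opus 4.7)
The plan is to follow Scarf's original constructive pivoting argument, which has the same spirit as the proofs of Sperner's lemma and the Lemke--Howson algorithm for bimatrix games. I would first rewrite the constraint system by introducing slack variables, $Ax + Is = b$ with $s \geqslant 0$, and work with the augmented matrix $[A \mid I_m]$ whose columns are indexed by $N \cup M$, where $N = \{1,\ldots,n\}$ indexes the original columns and $M = \{n+1,\ldots,n+m\}$ indexes the slacks. After a lexicographic perturbation of $b$ to rule out degeneracy, every size-$m$ linearly independent column set $B$ (a \emph{basis}) determines a unique basic solution, and I call $B$ \emph{feasible} if this solution is nonnegative, i.e.\ if it yields an extreme point of $\mathcal{P}$. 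On the ordinal side, I extend each row preference $\succ_i$ so that the slack column $n+i$ is the least preferred column seen by row $i$ and the remaining slack columns are outside row $i$'s support; I then call $B$ \emph{dominating} if for every column $j \notin B$ there is a row $i$ that dominates $j$ relative to the columns of $B$, in the sense of Definition~5.

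Second, I would construct a graph $\mathcal{G}$ whose vertices are bases that are feasible or dominating (not necessarily both) and whose edges join two bases that differ by a single column swap preserving one of the two properties. The basis $B_0 = M$ of all slacks is trivially feasible (the solution is $s = b > 0$), and by the chosen extension of preferences it is also dominating, so $B_0$ is a vertex of $\mathcal{G}$. The combinatorial core of the argument is to show that $B_0$ has degree one in $\mathcal{G}$, that every basis which is both feasible and dominating has degree one, and that every basis that satisfies exactly one of the two properties has degree exactly two. By the handshake lemma, the connected component of $B_0$ is then a simple path, and its other endpoint $B^*$ is necessarily both feasible and dominating. The associated basic solution $x^*$ is the point of $\mathcal{P}$ that dominates all columns of $A$.

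The hard part is verifying these degree conditions. From a feasible basis the allowable pivot is the standard simplex minimum-ratio test, made unique by the perturbation, which for each entering column specifies a unique leaving column. From a dominating basis the allowable pivot is the ordinal rule that, upon removing a column $k \in B$ whose removal would leave some column $j \notin B$ undominated, brings in the next-preferred available column in row $i$ so as to restore the dominating property. The delicate point is showing that at a basis satisfying exactly one of the two properties these two rules give two distinct incident edges, while at a basis satisfying both they collapse to a single edge; this requires the case analysis that forms the heart of Scarf's argument, together with a careful tracking of which column of $B$ is the ``pivot candidate'' in each setting.

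Finally, having produced a feasible dominating basis in the perturbed instance, I would let the perturbation parameter tend to zero and extract a convergent subsequence of the corresponding $x^*$. Since feasibility and the ordinal domination property are both preserved under such limits (the finite set of rankings and the closedness of $\mathcal{P}$ make the argument straightforward), the limit point is the desired $x^* \in \mathcal{P}$ that dominates every column of $A$.
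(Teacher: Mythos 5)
The paper does not prove this lemma at all: it is imported verbatim from Scarf's work and used as a black box, so there is no internal proof to compare yours against. Your sketch is recognizably the outline of Scarf's original complementary-pivoting argument (slacks, lexicographic perturbation, a path-following parity argument between a ``cardinal'' simplex pivot and an ``ordinal'' pivot, ending at a basis that is both feasible and dominating), and the setup details you supply -- slack column $n+i$ ranked last in row $i$ and lying outside every other row's support, the all-slack basis as the distinguished endpoint, the limit as the perturbation vanishes -- are all correct.

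The genuine gap is that the combinatorial core is asserted rather than proved, and the graph $\mathcal{G}$ as you literally define it does not have the degree properties you claim. If vertices are single bases that are feasible \emph{or} dominating and edges are arbitrary property-preserving single swaps, then a feasible basis typically has up to $n$ feasible neighbors (one for each entering column via the ratio test), not degree one or two; the same problem afflicts the ordinal side. Scarf's actual argument avoids this by tracking \emph{pairs} consisting of a feasible basis and an ordinal (dominating) basis that agree in all but at most one column -- the ``almost complementary'' configurations -- and it is these pairs, not individual bases, that have degree one at the endpoints and degree two in the interior. You acknowledge that ``the delicate point'' is exactly this case analysis, but that case analysis \emph{is} the proof: without specifying which column is forced to enter or leave at each configuration and verifying uniqueness and reversibility of both pivot types, the handshake argument has nothing to stand on. As written, this is a correct proof plan with its central lemma left open, not a proof.
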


\subsection{From TL-SF to Scarf's Lemma} \label{subsubsec3.2.1}

Given a TL-network $(G,s,t,c)$, where $G=(V\cup\{s,t\},E)$, and $\succ_v$ for any $v \in V$, in order to show the existence of TL-SF, we employ Scarf's Lemma, i.e. construct the corresponding matrix $A$ and vector $b$. By finding a vector that dominates all the columns of $A$, we obtain a corresponding TL-SF.

\begin{theorem} \label{thm3.1}
There exists a TL-SF in a TL-network.
\end{theorem}

See Appendix~\ref{app:thm3.1} for the formal proof. Here we present the high level proof strategy and the construction of $A$ and $b$. $A$ and $b$ should capture the flow capacity and feasibility constraints along with a proper setting for the row preferences. For the vector $x$, it will be convenient to introduce the variable $x_e$ that stands for the flow value of $e \in E$ and the corresponding column index of $A$.

We start with the edge capacity constraints:

\begin{enumerate}
\item For each $e \in E$, create column $x_e$ for $A$ and row $e$ for $A$ and $b$.
\item Set $A_{e x_e}=1$, other elements of $A$ in row $e$ are zeros, and $b_{e} = c(e)$.
\end{enumerate}

This allows us to assure that the flow of edge $e$ will not exceed its capacity $c(e)$. Notice that $A_{e x_e}$ is the only positive element in row $e$, so row $e$ only prefers column $x_e$. The part for capturing the capacity constraints involves an $|E| \times |E|$ sub-matrix of $A$ and an $|E|$ dimensional sub-vector of $b$.

To ensure flow feasibility and stability, we create two rows $v^{in}$ and $v^{out}$ and introduce two more auxiliary variables $x^{in}_v$ and $x^{out}_v$ for each vertex $v \in V$. Row $v^{in}$ handles the inflow of $v$ and row $v^{out}$ handles the outflow of $v$. Besides, $v$ is associated with a truncated linear mapping $g_v$, we add $v$ as a subscript of the parameters in Definition~\ref{def2.1}. That is, $g_v$ has parameters $a_v$ and $b_v$ involved. The following is the remaining construction of $A$:

\begin{enumerate}
\item For each $v \in V$, create row $v^{in}$ and $v^{out}$.
\item For each $v \in V$, create column $x^{in}_v$ and column $x^{out}_v$.
\item For $e=uv \in E$, set $A_{v^{in} x_e} = a_v$.
\item For $e=vu \in E$, set $A_{v^{out} x_e} = 1$.
\item For each $v \in V$, $A_{v^{in} x^{in}_v} = A_{v^{in} x^{out}_v}=A_{v^{out} x^{in}_v} = A_{v^{out} x^{out}_v}=1$.
\item Row $v^{in}$ prefers column $x^{in}_v$ the most and column $x^{out}_v$ the least. Preference of $e = uv \in E$ remains the same as $\succ_v$.
\item Row $v^{out}$ prefers column $x^{out}_v$ the most and column $x^{in}_v$ the least. Preference of $e = vu \in E$ remains the same as $\succ_v$.
\end{enumerate}

Notice that all the entries not mentioned are set to zeros. The remaining is to select proper values for $b_{x^{in}_v}$ and $b_{x^{out}_v}$. To achieve this, let $M(v) = \max(\sum_{uv \in E} c(uv), \sum_{vu \in E} c(vu), b_v) + 1$ be a large value such that at least one of $x^{in}_v$ and $x^{out}_v$ has to be positive. The remaining construction of $b$ is:

\begin{enumerate}
\item If $g_v$ is in the form of equation~\ref{eq1}, set $b_{v^{in}} = M(v)$ and $b_{v^{out}} = M(v) + b_v$.
\item If $g_v$ is in the form of equation~\ref{eq2}, set $b_{v^{in}} = M(v) + b_v$ and $b_{v^{out}} = M(v)$.
\end{enumerate}

The entire setting makes $A$ an $(|E| + 2|V|) \times (|E| + 2|V|)$ matrix and $b$ an $|E| + 2|V|$ dimensional vector. For vertex $v$, the difference between $b_{v^{out}}$ and $b_{v^{in}}$ is set to $b_v$. Depending on whether the inflow or outflow part is truncated, $b_{v^{out}}$ and $b_{v^{in}}$ are set to proper values accordingly.

By the setting of the preferences of row $v^{in}$ and $v^{out}$, variables $x^{in}_v$ and $x^{out}_v$ enable us to capture the stability and feasibility of a flow. When the inflow or outflow of $v$ does not reach the truncated threshold, the inflow or outflow is zero and one of the rows $v^{in}$ and $v^{out}$ dominates both columns $x^{in}_v$ and $x^{out}_v$. One of the rows $v^{in}$ and $v^{out}$ binds and its corresponding flow value is zero, while the other does not bind since the truncated threshold is not reached. When the inflow or outflow of $v$ reaches the truncated threshold, then row $v^{in}$ dominates column $x^{out}_v$ and row $v^{out}$ dominates column $x^{in}_v$. Flow feasibility is guaranteed because both row $v^{in}$ and $v^{out}$ bind.

Let $x^*$ be the Scarf's solution that dominates all the columns of $A$. With a bit abuse of notation, we label columns of $A$ by superscript or subscript of $x$. The superscript or subscript of $x^*$ stands for the exact value of the Scarf's solution $x^*$. For each $e \in E$, we can set $f(e) = {x^*}_e$. $x^*$ corresponds to a TL-SF by the setting of $A$ and $b$. On the other hand, a TL-SF corresponds to a vector $x^*$ that dominates all the columns of $A$. For more details of the proof, see Appendix~\ref{app:thm3.1}. For an example of a reduction from TL-SF to Scarf's Lemma, see Example~\ref{ex3} in the appendix.

\subsection{From PL-SF to TL-SF} \label{subsubsec3.2.2}

Given a PL-network $(G,s,t,c)$, where $G=(V\cup\{s,t\},E)$, and $\succ_v$ for any $v \in V$, as the existence of TL-SF is shown in Theorem~\ref{thm3.1}, we reduce PL-SF to TL-SF to show the existence of PL-SF.

\begin{corollary} \label{cor3.1}
There exists a PL-SF in a PL-network.
\end{corollary}

For each $v \in V$, $v$ is associated with a piecewise linear mapping $g_v$, we add $v$ as a subscript of the parameters in Definition~\ref{def2.2}. That is, $g_v$ has $k_v$ segments of truncated linear mappings $g_{v,i}$ and $c_{v,i-1}$ as segment boundaries where $i=1, ..., k_v$.

We create a sub-TL-network for each $v \in V$ as the following:
\begin{enumerate}
\item Split $v$ into $v_{in}$ and $v_{out}$ where $g_{v_{in}}(x)=x$ and $g_{v_{out}}(x)=x$.
\item For each $uv \in E$ (for simplicity, suppose $s_{out}=s$ and $t_{in} = t$), connect $u_{out}$ to $v_{in}$ by keeping the same capacity and preference as $uv$, that is, capacity of $u_{out} v_{in}$ is the same as $uv$, and $v_{in}$ has the same preference for incoming edges as $v$.
\item For each segment that applies $g_{v,i}$, create vertex $v_i$ where $g_{v_i}(x) = g_{v,i}$ and edges $v_{in} v_i$ and $v_i v_{out}$.
\item For capacities of $v_{in} v_i$ and $v_i v_{out}$:
\begin{enumerate}
\item Set $c(v_{in} v_i)=c_{v,i} - c_{v,i-1}$ and $c(v_i v_{out})=\max\{g_{v,i}(c_{v,i} - c_{v,i-1})\}$.
\item By (a) or (b), if $c(v_{in} v_i)=0$ then reset $c(v_{in} v_i) = \epsilon$; if $c(v_i v_{out})=0$ then reset $c(v_i v_{out}) = \epsilon$, for some small $\epsilon > 0$.
\end{enumerate}
\item $v_{in}$ prefers $v_{in} v_i$ and $v_{out}$ prefers $v_{out} v_i$ with smaller $i$.
\end{enumerate}

In the new TL-network, we split $v$ into $v_{in}$ and $v_{out}$ to handle incoming and outgoing edges in the same behavior as $v$ in the original PL-network, then create gadget vertices $v_i$ to handle the truncated linear mapping segments and prioritize the segments with smaller $i$ accordingly. Point 4.(c) deals with extreme cases to ensure that each edge has a positive capacity in the PL-network. For an illustration of the construction, see Example~\ref{ex4} in the Appendix.
\section{Reductions of Stable Flow Problems} \label{sec4}

In this section, we show the entire framework of reducing PL-SF to the generalized stable allocation (GSA) problem \cite{dean2009generalized}. We denote a Scarf's instance that corresponds to a TL-SF instance {\em TL-Scarf}. The intermediate step is to convert a TL-Scarf instance to a linear network with three layers. We denote the problem of finding stable flow in such networks {\em 3-layer-linear-SF}. Eventually, the GSA instance can be simply constructed from the 3-layer-linear-SF instance.


The following flowchart is an overview of the reductions between different problems.

\begin{center}
\begin{tikzpicture}
	\node[block] (mplm) at (0,0) {PL-SF};
	\node[block] (lm) at (6,0) {TL-SF};
	\node[block] (s) at (12,0) {TL-Scarf};
	\node[block] (aln) at (10,-3) {3-layer-linear-SF};
	\node[block] (gsa) at (2,-3) {GSA};
	\path[->]
		(mplm) edge node [above] {Section~\ref{subsubsec3.2.2}} (lm)
		(lm) edge node [left] {Corollary~\ref{cor5.1}} (aln)
		(lm) edge node [above] {Section~\ref{subsubsec3.2.1}} (s)
		(s) edge node [right] {Theorem~\ref{thm5.1}} (aln)
		(lm) edge node [left] {Corollary~\ref{cor5.3}} (gsa)
		(aln) edge [bend right=5] node [above] {Corollary~\ref{cor5.3}} (gsa)
		(gsa) edge [bend right=5] node [below] {Corollary~\ref{cor5.2}} (aln);
\end{tikzpicture}
\end{center}

\subsection{From TL-SF to 3-layer-linear-SF}

This reduction consists of two steps, from TL-SF to TL-Scarf shown in section~\ref{subsubsec3.2.1} and from TL-Scarf to 3-layer-linear-SF. Herein, we focus on the later part.

We will use the same notation as in section~\ref{subsubsec3.2.1} for the TL-Scarf instance with matrix $A$ and vector $b$. Beside the TL-Scarf instance, we are also given the corresponding TL-network $(G,s,t,c)$. The strategy is to create a bipartite-like graph with three layers. Recall that for $v \in V$ in the original TL-network, the rows $v^{in}$ and $v^{out}$ are created in the TL-Scarf instance. The first layer consists of vertices $v_{in}$ that correspond to rows $v^{in}$ while the third layer consists of vertices $v_{out}$ that correspond to rows $v^{out}$. Vertices in the second layer, on the other hand, either connects $u_{out}$ and $v_{in}$ for $uv \in E$ with a proper multiplier or connects $v_{out}$ and $v_{in}$ to prevent parallel edges appearing.

To construct the three-layer linear network $(G',s',t',c')$ where $G'=(V' \cup \{s', t'\}, E')$, for the vertex part $V'$:

\begin{enumerate}
\item Create $s'$, $s_{out}$, $t'$, and $t_{in}$.
\item For each row $v^{out}$, create vertex $v_{out}$.
\item For each row $v^{in}$, create vertex $v_{in}$.
\item For each column $x_e$, create vertex $m_e$.
\item For each column $x^{out}_v$, create vertex $m^{out}_v$.
\item For each column $x^{in}_v$, create vertex $m^{in}_v$.
\end{enumerate}

Starting from source $s'$, the first layer consists of the outgoing vertices $s_{out}$ and $v_{out}$, the second layer consists of the middle vertices $m_e$, $m^{out}_v$, and $m^{in}_v$, and the third layer consists of the incoming vertices $t_{in}$ and $v_{in}$, which evetually merge together and end at sink $t'$.

For the edge part $E'$ and capacities $c'$, we would like to caputure the TL-Scarf constraints. For $sv \in E$, the column $x^{sv}$ can only dominated by either the capacity constraint row $e$ or constraint row $v^{in}$. There is no row constraints associated with $s$, therefore, we should set $c'(s's_{out})$ in the new network large enough such that $s's_{out}$ will never be saturated. Similar argument applies for $c'(t_{in} t')$. For $v \in V$, $c'(s'v_{out})$ and $c'(v_{in}t')$ are set in a way to capture the upper bound of the TL-Scarf constraints. Capacities of edges connected to $m_e$ are set in a way to accommodate the flow through $m_e$ after applying its mutiplier. Finally, we connect $v_{in}$ and $v_{out}$ to $m^{out}_v$ and $m^{in}_v$ with proper capacities in order to prevent parallel edges and capture the TL-Scarf constraints. The setting of the edges and capacities is:

\begin{enumerate}
\item Set $c'(s' s_{out}) = \sum_{su \in E} {b_{su}} + 1$, and set $c'(s' v_{out}) = b_{v^{out}}$.
\item Set $c'(t_{in} t') = \sum_{ut \in E} {b_{ut}} + 1$, and set $c'(v_{in} t') = b_{v^{in}}$.
\item For each $e = uv \in E$, if $v \neq t$, set $c'(u_{out} m_e) = b_e$ and $c'(m_e v_{in}) = A_{v^{in} x_e} b_e$; if $v=t$, set $c'(u_{out} m_e) = c'(m_e t_{in}) = b_e$.
\item For each $v \in V$, set $c'(v_{out} m^{in}_v) = c'(v_{out} m^{out}_v) = c'(m^{in}_v v_{in}) = c'(m^{out}_v v_{in}) = \max\{b_{v^{in}}, b_{v^{out}}\}$.
\end{enumerate}

Point 1 sets the edge capacities between $s'$ and the first layer vertices while point 2 sets the edge capacities between the third layer vertices and $t'$. Point 3 and point 4 set the edge capacities from the first to the second layer and from the second to the third layer. The capacities are set in a way such that any vertex $v_{out}$ in the first layer must send some positive flow to either $v_{out} m^{in}_v$ or $v_{out} m^{out}_v$. Similarly, any vertex $v_{in}$ in the third layer must receive some positive flow from either $m^{in}_v v_{in}$ or $m^{out}_v v_{in}$. To match the multipliers in the TL-Scarf instance, the setting of the linear mapping for each vertex is:

\begin{enumerate}
\item The outflow of $s_{out}$, $t_{in}$, $v_{in}$, $v_{out}$, $m^1_v$, $m^2_v$, and $m_e$ where $e = vt$ are the same as their inflow.
\item The outflow of $m_e$ where $e=vw$ and $w \neq t$ is $A_{w^1 x_e}$ times of its inflow.
\end{enumerate}

Eventually, base on the setting of the TL-Scarf instance, we set the preference of vertices as the following:

\begin{enumerate}
\item For each $v \in V$, $v_{out}$ prefers $v_{out} m^{out}_v$ the most and $v_{out} m^{in}_v$ the least, the preference of $v_{out} m_e$ for some $e = vu$ is the same as in row $v^{out}$.
\item For each $v \in V$, $v_{in}$ prefers $m^{in}_v v_{in}$ the most and $m^{out}_v v_{in}$ the least, the preference of $m_e v_{in}$ for some $e = uv$ is the same as in row $v^{in}$.
\item The preference of $s_{out}$ and $t_{in}$ is arbitrary.
\end{enumerate}

Suppose we have a stable flow $f'$ of this three-layer linear network, we show that by setting ${x^*}_e = f'(u_{out} m_e)$ for $e=uv$, ${x^*}^{in}_v = f'(v_{out} m^{in}_v)$, and ${x^*}^{out}_v = f'(v_{out} m^{out}_v)$, we have a solution for TL-Scarf. On the other hand, the solution of TL-Scarf also corresponds to a stable flow. See Appendix~\ref{app:thm5.1} for the proof details of Theorem~\ref{thm5.1} and Example~\ref{ex9} in the Appendix for an example.

\begin{theorem} \label{thm5.1}
$f'$ is a stable flow of the above mentioned three-layer linear network if and only if $x^*$ is a solution of TL-Scarf.
\end{theorem}

A TL-Scarf instance is actually defined by a hidden TL-network. Therefore, we can shorten the two-stage reduction from TL-SF to TL-Scarf and from TL-Scarf to 3-layer-linear-SF into a reduction from TL-SF to 3-layer-linear-SF. See Appendix~\ref{app:cor5.1} for the proof.

\begin{corollary} \label{cor5.1}
For a TL-network, there is an equivalent three-layer linear network. That is, a stable flow of the three-layer linear network corresponds to a stable flow of the original TL-network.
\end{corollary}

\subsection{Stable Flow and Generalized Stable Allocation} \label{app:subsec5.1}

In this section, we introduce the generalized stable allocation (GSA) problem and show a reduction from TL-SF to GSA and a reduction from GSA to 3-layer-linear-SF.

\subsubsection{The Generalized Stable Allocation Problem}

The original stable allocation problem was stated in \cite{baiou2002stable}. Here we consider a more general case, the {\em generalized stable allocation} (GSA) problem. With a bit abuse of notation, GSA may stand for the problem itself or an allocation that is stable. The problem setting includes two finite disjoint sets $I$ and $J$ of jobs and machines respectively and an edge set $C$ (parallel edges are allowed) that forms a bipartite graph $G = (I \cup J, C)$. The function $u: C \to \mathbb{R}_+$ maps an egde to its capacity. Each job $i \in I$ has its own preference $\succ_i$ on edges in $C$ that have $i$ as an endpoint, and each machine $j \in J$ has its own preference $\succ_j$ on edges in $C$ that have $j$ as an endpoint. Each edge has a multiplier $\mu(ij) > 0$ that tells us one unit of job $i$ uses up $\mu(ij)$ units of capacity when assigned to machine $j$.

\begin{definition} \label{def5.1}
Let $x(ij)$ denote the quantity that job $i$ is assigned to machine $j$, an {\bf assignment} $x$ is {\bf feasible} if all of the followings hold:

\begin{align}
& \sum_{j \in J}{x(ij)} \leqslant 1 & \forall i \in I \label{eq7}\\
& \sum_{i \in I}{\mu(ij) x(ij)} \leqslant 1 & \forall j \in J  \label{eq8}\\
& 0 \leqslant x(ij) \leqslant u(ij) & \forall ij \in C \label{eq9}
\end{align}
\end{definition}

If the equality holds in equation~\ref{eq7}, then job $i$ is {\em $x$-saturated}. Similarly, if the equality holds in equation~\ref{eq8}, then machine $j$ is $x$-saturated. Edge $ij$ is saturated if $x(ij) = u(ij)$ in equation~\ref{eq9}. Now we can define stability:

\begin{definition} \label{def5.2}
An assignment $x$ is {\bf stable} if it is feasible and for each $ij \in C$, at least one of the followings hold:

\begin{enumerate}
\item $ij$ is saturated.
\item $i$ is $x$-saturated and $ij' \succ_i ij$ holds for any other $j' \neq j$ such that $x(ij') > 0$.
\item $j$ is $x$-saturated and $i'j \succ_j ij$ holds for any other $i' \neq i$ such that $x(i'j) > 0$.
\end{enumerate}
\end{definition}

In short, an assignment is stable if there is no blocking edge. A blocking edge is an unsaturated pair where the involved job and machine both mutually prefer this edge to some of their other currently assigned edges.

\begin{remark}
Our setting is different from the one in \cite{dean2009generalized} but they are equivalent. In their setting, adding a dummy job and a dummy machine always guarantees that all jobs and all machines except the dummy machine are $x$-saturated in a feasible assignment.
\end{remark}

\subsubsection{The Reductions}

The bi-directional reducibility between the stable allocation and the traditional stable flow problem was shown in \cite{fleiner2010stable}. We overview analogous statements, a reduction from GSA to 3-layer-AL-SF and a reduction from TL-SF to GSA.

\begin{corollary} \label{cor5.2}
GSA can be reduced to 3-layer-linear-SF.
\end{corollary}

\begin{proof}
For an instance of GSA including $G = (I \cup J, C)$, $\succ_i$ for $i \in I$, $\succ_j$ for $j \in J$, and the edge capacity function $u$, we construct the instance of 3-layer-linear-SF including $(G,s,t,c)$, $G=(V \cup \{s,t\},E)$, $\succ_{v}$ for $v \in V$, and $g_{v}$ for $v \in V$:

\begin{enumerate}
\item Create $s$, $t$, $V = \{v_i | i \in I\} \cup \{v_j | j \in J\} \cup \{v_{ij} | ij \in C\}$.
\item Set $c(s v_i) = 1$ for every $i \in I$ and $c(v_j t) = 1$ for every $j \in J$.
\item Set $c(v_i v_{ij}) = u(ij)$, $c(v_{ij} j) = \mu(ij) u(ij)$, and $g_{v_{ij}}(x) = \mu(ij) x$ for every $ij \in C$.
\item $v_i v_{ij} \succ_{v_i} v_i v_{ij'}$ in the 3-layer-linear-SF instance if and only if $ij \succ_i ij'$ in the GSA instance.
\item $v_{ij} v_j \succ_{v_j} v_{i'j} v_j$ in the 3-layer-linear-SF instance if and only if $ij \succ_j i'j$ in the GSA instance.
\end{enumerate}

By setting $x(ij) = f(v_i v_{ij})$, we have a solution for GSA. One can check that we have GSA if and only if we have 3-layer-linear-SF. If the assignment is not stable, then there exists a walk such that all three conditions in Definition~\ref{def2.5} do not hold, as a result, we can find a blocking walk from $s$ or some $v_i$ to $t$ or some $v_j$. If we can find a blocking walk in our 3-layer-linear-SF instance, then from the structure of the three-layer linear network, it must be from $s$ or some $v_i$ to $t$ or some $v_j$. In either case, all three conditions in Definition~\ref{def5.2} do not hold.
\qed
\end{proof}

\begin{corollary} \label{cor5.3}
TL-SF can be reduced to GSA.
\end{corollary}

\begin{proof}
As shown in Corollary~\ref{cor5.1}, TL-SF can be reduced to 3-layer-linear-SF. Therefore, we can just follow the analogous approach described in Corollary~\ref{cor5.1} and use the same notations. Suppose we apply Corollary~\ref{cor5.1} on the orignal TL-network and obtain the three-layer linear network. That is, we are given the three-layer linear network $(G',s',t',c')$ where $G'=(V'\cup\{s',t'\},E')$. Construct the GSA instance with $G = (I \cup J, C)$ as the following:

\begin{enumerate}
\item Let $I = \{i_s\} \cup \{i_v | v_{out} \in V'\}$ and $J = \{j_t\} \cup \{j_v | v_{in} \in V'\}$.
\item Let $C = \{i_w j_v | m_{wv} \in E'\} \cup \{i_w j_w | m^{out}_v \in V'\} \cup \{j_w i_w | m^{in}_v \in V'\}$.
\item Set $u(i_w j_v) = \frac{c'(w_{out} v_{in})}{c'(s' w_{out})}$ for each $m_{wv} \in E'$.
\item Set $u(i_w j_w) = u(j_w i_w) = \frac{c'(w_{out} m^{out}_w)}{c'(s' w_{out})}$ for each $m^{out}_w \in E'$.
\item Set $\mu(i_w j_v) = \frac{a_v c'(s' w_{out})}{c'(v_{in} t')}$ and $\mu(i_w j_w) = \mu(j_w i_w) = \frac{c'(s' w_{out})}{c'(w_{in} t')}$.
\item For each $w_{out} \in V'$, $i_w$ prefers $i_w j_w$ the most and $j_w i_w$ the least, the preference of $i_w$ for $i_w j_v$ where $m_{wv} \in E'$ is the same as $\succ_{w_{out}}$.
\item For each $v_{in} \in V'$, $j_v$ prefers $j_v i_v$ the most and $i_v j_v$ the least, the preference of $j_v$ for $i_w j_v$ where $m_{wv} \in E'$ is the same as $\succ_{v_{in}}$.
\item The preference of $i_s$ and $j_t$ is arbitrary.
\end{enumerate}

Clearly, the flow is stable in the original TL-network if and only if the allocation is stable in the new GSA instance by analogous argument in Corollary~\ref{cor5.1}. By setting $f(wv) = c'(s' w_{out}) x(i_w j_v)$ for each $wv \in E$, we have a solution of TL-SF.
\qed
\end{proof}

\begin{remark}
Although the bipartite graph $G$ as an undirected graph, $i_v j_v$ and $j_v i_v$ will be regarded as different edges for convenience.
\end{remark}

The GSA construction is actually equivalent to the three-layer lienar network in Corollary~\ref{cor5.1} with some proper capacity scaling. For an example, see Example~\ref{ex10} in the Appendix.

\subsection{Analysis}
We analyze the time complexity for finding both PL-SF and TL-SF.

%
%

\begin{corollary} \label{cor5.4}
PL-SF and TL-SF can be found in polynomial time.
\end{corollary}

\begin{proof}
Let us start with TL-SF. We can simply apply Corollary~\ref{cor5.3} and get an equivalent GSA then apply the algorithm described in \cite{dean2009generalized}. Suppose we are given a TL-network $(G,s,t,c)$ where $G=(V\cup\{s,t\},E)$, the equivalent GSA has $O(|V|)$ machines and jobs, and $O(|E|)$ edges. For the algorithm in \cite{dean2009generalized}, the length of the path to augment in each iteration is $O(|V|)$. The number of augmentations is $O(|E|)$. It takes $O(|E||V|)$ time.

Suppose we are given a PL-network with graph $(G,s,t,c)$ where $G=(V\cup\{s,t\},E)$, first reduce it to a TL-network then reduce TL-SF to GSA. The final equivalent GSA will have $O(|E|+K)$ vertices and $O(|E|+K)$ edges, where $K = \sum_{v \in V} {k_v}$. Recall that $k_v$ is the number of segments for each vertex's piecewise linear mapping, so $K$ is the total number of segments. By the special sturcture of the new TL-network and GSA, the length of path to augment in each iteration is $O(|V|)$ since edges from different segments of the same vertex cannot be augmented at the same time by the algorithm in \cite{dean2009generalized}, and the number of augmentations is $O(|E|+K)$. Therefore, it takes $O((|E|+K)|V|)$ time. Note that the information of each linear piecewise segment is a part of the input, so $K$ is polynomial in the size of the input.
\qed
\end{proof}

%

\begin{remark}
Using dynamic tree implementation,  we can actually  design a faster algorithm in \cite{dean2009generalized}. As a path is represented by a tree structure while using dynamic trees, all of the above mentioned can be done in $O(\log |V|)$ time where the length of the path to augment is $O(|V|)$. Therefore, this also yields $O(|E| \log |V|)$ bound for finding TL-SF and $O((|E|+K) \log |V|)$ bound for finding PL-SF.

\begin{corollary} \label{cor5.5}
TL-SF can be found in $O(|E| \log |V|)$ time and PL-SF can be found in $O((|E|+K) \log |V|)$ time.
\end{corollary}
\end{remark}

\section{The Optimal Stable Flow Problem} \label{sec5}

Given a network $(G,s,t,c)$, suppose while assigning flow values, we care about not only stability but also the cost. For each edge $e \in E$, $p: E \to \mathbb{R}_{\geqslant 0}$ is the {\em price per flow}. Given a feasible flow assignment $f$, the cost of this edge is $f(e) p(e)$. The total cost of the flow assignment is $\sum_{e \in E}{f(e) p(e)}$. The decision version of the stable flow optimization problem is defined as the following:

\begin{definition}[The Optimal Stable Flow Problem] \label{def7.1}
Given a network $(G,s,t,c)$, the price per flow function $p$, and the budget $B \in \mathbb{R}_{\geqslant 0}$, determine whether there exists a stable flow such that the total cost is at most $B$.
\end{definition}

We show that the traditional optimal stable flow problem is polynomial time solvable, while the optimal TL-SF and PL-SF are all NP-complete.

\subsection{The Optimal Stable Flow Problem}
For this problem, we assume that inflow is equal to outflow for each vertex. Before proving optimal stable flow problem is polynomial time solvable, we introduce the stable allocation problem. This is because the proof is based on the reduction from the stable flow problem to the stable allocation problem shown in \cite{fleiner2010stable}.

The stable allocation problem is the same as GSA except that each job or machine $k \in I \cup J$ has a quota $q(k)$ and the multiplier $\mu(ij)$ of each edge is one, that is, one unit of job $i$ uses up only one unit of capacity when assigned to machine $j$.

%
%
%
%

\begin{theorem} \label{thm6.1}
The optimal stable flow problem is polynomial time solvable.
\end{theorem}

\begin{proof}
We reduce this problem to the optimal stable allocation problem. Suppose we are given $(G,s,t,c)$, $G=(V\cup\{s,t\},E)$, $\succ_v$ for each $v \in V$, and the cost function $p$, construct the optimal stable allocation instance as the following:

\begin{enumerate}
\item Let $I = \{i_s\} \cup \{i_v | v \in V\}$ and $J = \{j_t\} \cup \{j_v | v \in V\}$.
\item Let $C = \{i_w j_v | wv \in E\} \cup \{i_v j_v | v \in V\} \cup \{j_v i_v | v \in V\}$.
\item Set $u(i_w j_v) = c(wv)$.
\item Let $M(v) = \max\{\sum_{wv \in E}{c(wv)}, \sum_{vw \in E}{c(vw)}\} + 1$ for each $v \in V \cup \{s,t\}$.
\item Set $q(i_v) = M(v)$ for each $i_v \in I$ and $q(j_v) = M(v)$ for each $j_v \in J$.
\item Set $u(i_v j_v) = u(j_v i_v) = \max\{\sum_{wv \in E}{c(wv)}, \sum_{vw \in E}{c(vw)}\} + 1$ for each $v \in V \cup \{s,t\}$.
\item For each $w \in V$, $i_w$ prefers $i_w j_w$ the most and $j_w i_w$ the least, the preference of $i_w$ for $i_w j_v$ where $wv \in V$ is the same as $\succ_w$.
\item For each $v \in V$, $j_v$ prefers $j_v i_v$ the most and $i_v j_v$ the least, the preference of $j_v$ for $i_w j_v$ where $wv \in E$ is the same as $\succ_v$.
\item Let $h(i_w j_v) = c(wv)$ for each $wv \in E$. The costs of the rest of the edges in $C$ are 0.
\end{enumerate}

By solving the optimal stable allocation instance and setting $f(wv) = x(i_w j_v)$, we obtain the stable flow with the minimum cost. See the details in \cite{fleiner2010stable} for to proof of stability. Clearly, the costs of the stable flow and the stable allocation are the same. The optimal stable allocation problem is polynomial time solvable according to \cite{dean2010faster}, so the reduction indicates that the optimal stable flow problem can also be solved in polynomial time.
\qed
\end{proof}

\subsection{Optimality of Generalized Stable Flow}

To show that the optimal PL-SF and TL-SF are NP-complete, we start with a problem that is NP-hard, the {\em optimal generalized stable allocation} (optimal GSA) problem.

The optimal GSA problem is the same as GSA except each edge $ij \in C$ has a cost $h(ij)$ where $h: C \to \mathbb{R}_{\geqslant 0}$. The total cost of an assignment $x$ is $\sum_{ij \in C}{h(ij) x(ij)}$. We consider not only stability but also whether the total cost is less than a given budget $H$ where $H \in \mathbb{R}_{\geqslant 0}$.
\cite{dean2009generalized}) show a reduction of the optimal GSA problem to the subset sum problem to obtain the following result.

\begin{theorem}[\cite{dean2009generalized})] \label{thm6.0}
Optimal GSA problem is NP-complete.
\end{theorem}

We show that PL-SF and TL-SF are all NP-complete by first presenting a polynomial time certificate and second reducing optimal GSA to these problems.

\begin{theorem} \label{thm6.2}
Optimal PL-SF, TL-SF, and AL-SF are all NP-complete.
\end{theorem}

\begin{proof}
Given a flow assignment, it is straightforward to check if a flow is feasible and the budget is exceeded or not in polynomial time.

For stability, we cannot check if each walk of a network is a blocking walk or not since there may be a lot of walks in a network. For a PL-network, we apply the reduction in section~\ref{subsubsec3.2.1} to get an equivalent TL-network, while for a TL-network, we apply the reduction in Corollary~\ref{cor5.3} to obtain an equivalent GSA. Checking if an assignment is stable only takes polynomial time, so checking if the correpsonding flow assignment is stable for the original PL-network or TL-network only takes polynomial time.

Therefore, we can check if a flow is feasible and stable, and the cost is below budget in polynomial time. Optimal PL-SF and TL-SF are all in NP.

For NP-hardness, as 3-layer-linear-SF is a special case of TL-SF and PL-SF, it suffices to reduce optimal GSA to an optimal stable flow problem in a three-layer linear network. We apply the reduction in Corollary~\ref{cor5.2} and additionally set $p(v_i v_{ij}) = h(ij)$ for each $ij \in C$ while the cost of other edges are set to zero. Clearly, the costs of the stable flow and the stable assignment are the same. Therefore, we have a stable assignment below budget if and only if we have a stable flow below budget.
\qed
\end{proof}
\section{Conclusion}
Our paper introduced a  general version of stable flow problem. We show the existence of a stable flow and a polynomial time algorithm to find such a solution.  This provides an initial step in the analysis of generalized flows with ordinal preferences.  We beleive that our results has significant implications in the design and analysis of supply chain networks with agents' hetergenous preferences.

We further show that unlike the traditional stable flow proble, the minimum cost stable generalized flow is NP hard.  This shows that the generalized stable flow is fundamentally more difficult than its original formulation.


%
%
%

\subsubsection*{Acknowledgment}
This research is partly supported by  National Science Foundation Grants AST-
1443965, CMMI 1728165.

\bibliographystyle{abbrvnat}
\bibliography{reference}

\begin{thebibliography}{27}
\providecommand{\natexlab}[1]{#1}
\providecommand{\url}[1]{\texttt{#1}}
\expandafter\ifx\csname urlstyle\endcsname\relax
  \providecommand{\doi}[1]{doi: #1}\else
  \providecommand{\doi}{doi: \begingroup \urlstyle{rm}\Url}\fi

\bibitem[Abdulkadirogl\u{u} and S\"{o}nmez(2003)]{Atila2003}
A.~Abdulkadirogl\u{u} and T.~S\"{o}nmez.
\newblock School choice: A mechanism design approach.
\newblock \emph{American Economic Review}, 93\penalty0 (3):\penalty0 729--747,
  June 2003.

\bibitem[Ahuja et~al.(1993)Ahuja, Magnanti, and Orlin]{flowbook}
R.~K. Ahuja, T.~L. Magnanti, and J.~B. Orlin.
\newblock \emph{{Network Flows: Theory, Algorithms, and Applications}}.
\newblock Prentice Hall, 1 edition, Feb. 1993.
\newblock ISBN 013617549X.

\bibitem[Ahuja et~al.(2014)Ahuja, Magnanti, and Orlin]{ahuja2014network}
R.~K. Ahuja, T.~L. Magnanti, and J.~B. Orlin.
\newblock \emph{Network flows}.
\newblock Elsevier, 2014.

\bibitem[Azevedo and Hatfield(2015)]{azevedo2015existence}
E.~M. Azevedo and J.~W. Hatfield.
\newblock Existence of equilibrium in large matching markets with
  complementarities.
\newblock \emph{Manuscript, Wharton School, Univ. Pennsylvania}, 2015.

\bibitem[Ba{\"\i}ou and Balinski(2002)]{baiou2002stable}
M.~Ba{\"\i}ou and M.~Balinski.
\newblock The stable allocation (or ordinal transportation) problem.
\newblock \emph{Mathematics of Operations Research}, 27\penalty0 (3):\penalty0
  485--503, 2002.

\bibitem[Che et~al.(2015)Che, Kim, and Kojima]{che2015stable}
Y.-K. Che, J.~Kim, and F.~Kojima.
\newblock Stable matching in large economies.
\newblock \emph{Manuscript, Department of Economics, Columbia University},
  2015.

\bibitem[Cseh and Matuschke(2013)]{cseh2013new}
{\'A}.~Cseh and J.~Matuschke.
\newblock New and simple algorithms for stable flow problems.
\newblock \emph{arXiv preprint arXiv:1309.3701}, 2013.

\bibitem[Cseh et~al.(2013)Cseh, Matuschke, and Skutella]{cseh2013stable}
{\'A}.~Cseh, J.~Matuschke, and M.~Skutella.
\newblock Stable flows over time.
\newblock \emph{Algorithms}, 6\penalty0 (3):\penalty0 532--545, 2013.

\bibitem[Dean and Munshi(2010)]{dean2010faster}
B.~C. Dean and S.~Munshi.
\newblock Faster algorithms for stable allocation problems.
\newblock \emph{Algorithmica}, 58\penalty0 (1):\penalty0 59--81, 2010.

\bibitem[Dean and Swar(2009)]{dean2009generalized}
B.~C. Dean and N.~Swar.
\newblock The generalized stable allocation problem.
\newblock In \emph{WALCOM}, pages 238--249. Springer, 2009.

\bibitem[Fleiner(2010)]{fleiner2010stable}
T.~Fleiner.
\newblock On stable matchings and flows.
\newblock In \emph{WG}, pages 51--62. Springer, 2010.

\bibitem[Fleiner et~al.(2016)Fleiner, Jank{\'o}, Tamura, and
  Teytelboym]{fleiner2016trading}
T.~Fleiner, Z.~Jank{\'o}, A.~Tamura, and A.~Teytelboym.
\newblock Trading networks with bilateral contracts.
\newblock 2016.

\bibitem[Gale and Shapley(1962)]{gale1962college}
D.~Gale and L.~S. Shapley.
\newblock College admissions and the stability of marriage.
\newblock \emph{The American Mathematical Monthly}, 69\penalty0 (1):\penalty0
  9--15, 1962.

\bibitem[Hatfield et~al.(2013)Hatfield, Kominers, Nichifor, Ostrovsky, and
  Westkamp]{hatfield2013stability}
J.~W. Hatfield, S.~D. Kominers, A.~Nichifor, M.~Ostrovsky, and A.~Westkamp.
\newblock Stability and competitive equilibrium in trading networks.
\newblock \emph{Journal of Political Economy}, 121\penalty0 (5):\penalty0
  966--1005, 2013.

\bibitem[Hatfield et~al.(2015)Hatfield, Kominers, Nichifor, Ostrovsky, and
  Westkamp]{hatfield2015chain}
J.~W. Hatfield, S.~D. Kominers, A.~Nichifor, M.~Ostrovsky, and A.~Westkamp.
\newblock Chain stability in trading networks.
\newblock Technical report, Working paper, 2015.

\bibitem[Haxell and Wilfong(2008)]{haxell2008fractional}
P.~E. Haxell and G.~T. Wilfong.
\newblock A fractional model of the border gateway protocol (bgp).
\newblock In \emph{Proceedings of the nineteenth annual ACM-SIAM symposium on
  Discrete algorithms}, pages 193--199. Society for Industrial and Applied
  Mathematics, 2008.

\bibitem[Irving et~al.(1987)Irving, Leather, and Gusfield]{irving1987efficient}
R.~W. Irving, P.~Leather, and D.~Gusfield.
\newblock An efficient algorithm for the “optimal” stable marriage.
\newblock \emph{Journal of the ACM (JACM)}, 34\penalty0 (3):\penalty0 532--543,
  1987.

\bibitem[Kantorovich(1960)]{kantorovich1960mathematical}
L.~V. Kantorovich.
\newblock Mathematical methods of organizing and planning production.
\newblock \emph{Management Science}, 6\penalty0 (4):\penalty0 366--422, 1960.

\bibitem[Kintali(2008)]{kintali}
S.~Kintali.
\newblock Scarf is {PPAD}-complete.
\newblock \emph{CoRR}, abs/0812.1601, 2008.

\bibitem[Nguyen and Vohra(2016)]{nguyen2016near}
T.~Nguyen and R.~Vohra.
\newblock Near-feasible stable matchings with couples.
\newblock 2016.

\bibitem[Olver and V{\'{e}}gh(2016)]{vegh}
N.~Olver and L.~A. V{\'{e}}gh.
\newblock A simpler and faster strongly polynomial algorithm for generalized
  flow maximization.
\newblock \emph{CoRR}, abs/1611.01778, 2016.

\bibitem[Ostrovsky(2008{\natexlab{a}})]{Ostrovsky}
M.~Ostrovsky.
\newblock Stability in supply chain networks.
\newblock \emph{American Economic Review}, 98\penalty0 (3):\penalty0 897--923,
  2008{\natexlab{a}}.

\bibitem[Ostrovsky(2008{\natexlab{b}})]{ostrovsky2008stability}
M.~Ostrovsky.
\newblock Stability in supply chain networks.
\newblock \emph{The American Economic Review}, pages 897--923,
  2008{\natexlab{b}}.

\bibitem[Roth and Peranson(1999)]{roth-peranson}
A.~E. Roth and E.~Peranson.
\newblock The redesign of the matching market for american physicians: Some
  engineering aspects of economic design.
\newblock \emph{American Economic Review}, 89\penalty0 (4):\penalty0 748--780,
  1999.

\bibitem[Roth et~al.(2005)Roth, S{\"o}nmez, and {\"U}nver]{roth2005kidney}
A.~E. Roth, T.~S{\"o}nmez, and M.~U. {\"U}nver.
\newblock A kidney exchange clearinghouse in new england.
\newblock \emph{American Economic Review}, pages 376--380, 2005.

\bibitem[Scarf()]{scarf}
H.~E. Scarf.
\newblock An elementary proof of a theorem on the core of an n person game.

\bibitem[Scarf(1967)]{scarf1967core}
H.~E. Scarf.
\newblock The core of an n person game.
\newblock \emph{Econometrica: Journal of the Econometric Society}, pages
  50--69, 1967.

\end{thebibliography}
\appendix

\begin{center}
{\bf \huge Appendix}
\end{center}

\section{Proof of Theorem~\ref{thm3.1}}\label{app:thm3.1}
We prove the correctness of the reduction in section~\ref{subsubsec3.2.1} by showing that $f$ is a TL-SF if and only if $x^*$ is a Scarf's solution.

Scarf $\to$ TL-SF:

Suppose $x^*$ is a solution of Scarf, then $x^*$ dominates every column of $A$. First we have to show flow feasibility of $f$.

Assume row $v^{in}$ dominates column $x^{in}_v$ and $x^{out}_v$, then row $v^{in}$ prefers other nonzero entries (column $x_e$ where $e = uv \in E$ and column $x^{out}_v$) to column $x^{in}_v$. Row $v^{in}$ prefers column $x^{in}_v$ the most, so all other entries of $x^*$ must be zero and ${x^*}^{in}_v = b_{v^{in}}$. In order to obey constraint row $v^{out}$, $g_v$ must follow equation~\ref{eq1}, since otherwise $b_{v^{out}} < b_{v^{in}} = {x^*}^{in}_v$ implies that constraint row $v^{out}$ will be violated. Therefore, ${x^*}^{in}_v = M(v)$, ${x^*}^{out}_v = 0$, and $f_{in}(v) = \sum_{e = uv \in E}{x^*}_e = 0$. From constraint row $v^{out}$:

\[\sum_{e = vu \in E}{x^*}_e + {x^*}^{in}_v + {x^*}^{out}_v \leqslant M(v) + b_v \implies f_{out}(v) = \sum_{e = vu \in E}{x^*}_e \leqslant b_v\]

By similar argument, if row $v^{out}$ dominates column $x^{in}_v$ and $x^{out}_v$, $g_v$ must follow equation~\ref{eq2}, so ${x^*}^{in}_v = 0$, ${x^*}^{out}_v = M(v)$, and $f_{out}(v)  = \sum_{e = vu \in E}{x^*}_e = 0$. From constraint row $v^{in}$:

\[a_v \sum_{e = uv \in E}{x^*}_e + {x^*}^{in}_v + {x^*}^{out}_v \leqslant M(v) + b_v \implies f_{in}(v) = \sum_{e = vu \in E}{x^*}_e \leqslant \frac{b_v}{a_v}\]

The remaining case is row $v^{in}$ dominates column $x^{out}_v$ and row $v^{out}$ dominates column $x^{in}_v$. Both row $v^{in}$ and row $v^{out}$ bind, so we have either:

\[\begin{cases} 
      a_v \sum_{e = uv \in E}{x^*}_e + {x^*}^{in}_v + {x^*}^{out}_v = M(v) \\
      \sum_{e = vu \in E}{x^*}_e + {x^*}^{in}_v + {x^*}^{out}_v = M(v) + b_v
   \end{cases} \implies
a_v \sum_{e = uv \in E}{x^*}_e + b_v = \sum_{e = vu \in E}{x^*}_e\]

if $g_v$ follows equation~\ref{eq1} or

\[\begin{cases} 
      a_v \sum_{e = uv \in E}{x^*}_e + {x^*}^{in}_v + {x^*}^{out}_v = M(v) + b_v \\
      \sum_{e = vu \in E}{x^*}_e + {x^*}^{in}_v + {x^*}^{out}_v = M(v)
   \end{cases} \implies
a_v \sum_{e = uv \in E}{x^*}_e - b_v = \sum_{e = vu \in E}{x^*}_e\]

if $g_v$ follows equation~\ref{eq2}.

In either case, $f_{out}(v) = g_v(f_{in}(v))$. Besides, if $g_v$ follows equation~\ref{eq1}, then $\sum_{e = vu \in E}{x^*}_e \geqslant b_v$. Similarly, if $g_v$ follows equation~\ref{eq2}, then $\sum_{e = uv \in E}{x^*}_e \geqslant \frac{b_v}{a_v}$.

For stability, given $x^*$ dominates all columns of $A$, we show a proof by contradiction. Suppose $f$ is not stable by assigning $f(e) = {x^*}_e$ for each $e \in E$. That is, there is a blocking walk $W=(v_1, ..., v_k)$ with vector $V_W = (r_1, ..., r_{k-1})$ that satisfies the three conditions in Definition~\ref{def2.5}. Let $r_i$ be the first positive entry in $V_W$, consider the following cases:

\begin{enumerate}
\item If $i>1$, then $r_{i-1}=0$. By point 1.(c) in Definition~\ref{def2.5}, the only way to let $f_{out}(v_i)$ increase and $f_{in}(v_i)$ remain the same is $g_{v_i}$ follows equation~\ref{eq1} and $f(v_{i-1} v_i)=0$. Therefore, row ${v_i}^{in}$ dominates column $x^{in}_{v_i}$ and $x^{out}_{v_i}$, ${x^*}^{in}_{v_i}=M(v_i)$, and ${x^*}^{out}_{v_i}=0$. This indicates row ${v_i}^{out}$ does not dominate column $x_{v_i v_{i+1}}$ because $x^{in}_{v_i}>0$ and $x^{in}_{v_i}$ is the least preferred column. Row $v_i v_{i+1}$ does not dominate column $x_{v_i v_{i+1}}$ because edge $v_i v_{i+1}$ still has at least $r_i$ remaining capacity. Hence, row ${v_{i+1}}^{in}$ must dominate column $x_{v_i v_{i+1}}$.

\item If $i=1$, then row $v_1 v_2$ does not dominate column $x_{v_1 v_2}$ because edge $v_1 v_2$ still has at least $r_1$ remaining capacity. By point 2 in Definition~\ref{def2.5}, if $v_1=s$ then row ${v_2}^{in}$ must dominate column $x_{v_1 v_2}$ because there are no corresponding rows in $A$ for vertex $s$; if $v_1 v_2 \succ_{v_1} v_1 u$ for some other edge $v_1 u \in E$ and $f(v_1 u) > 0$, then row ${v_1}^{out}$ does not dominate column $x_{v_1 v_2}$ because it prefers column $x_{v_1 u}$ more, so row ${v_2}^{in}$ must dominate column $x_{v_1 v_2}$.
\end{enumerate}

In either case, row ${v_{i+1}}^{in}$ must dominate column $x_{v_i v_{i+1}}$. Let $r_j$ be the first zero entry in $V_W$ where $r_{j-1}>0$, if there is no such entry, then $j = k$. Consider the subvector $(r_i, r_{i+1}, ..., r_{j-1})$ of $V_W$ where all entries are positive. Since  row ${v_{i+1}}^{in}$ must dominate column $x_{v_i v_{i+1}}$, we have ${x^*}^{in}_{v_{i+1}} > 0$ and ${x^*}^{out}_{v_{i+1}} = 0$. Row $v_{i+1} v_{i+2}$ cannot dominate column $x_{v_{i+1} v_{i+2}}$ because edge $v_{i+1} v_{i+2}$ still has at least $r_{i+1}$ remaining capacity. Row ${v_{i+1}}^{out}$ cannot dominate column $x_{v_{i+1} v_{i+2}}$ because column $x^{in}_{v_{i+1}}$ is the least preferred and ${x^*}^{in}_{v_{i+1}} > 0$. Therefore, row ${v_{i+2}}^{in}$ must dominate column $x_{v_{i+1} v_{i+2}}$. By repeating analogous arguement, we eventually have row ${v_j}^{in}$ must dominate column $x_{v_{j-1} v_j}$.

Consider the following cases of $j$:
\begin{enumerate}
\item If $r_{j-1} > 0$ and $r_j=0$, then the only way to let $f_{in}(v_j)$ increase and $f_{out}(v_j)$ remain the same is $g_{v_j}$ follows equation~\ref{eq2} and $f(v_j v_{j+1})=0$. Therefore, row ${v_j}^{out}$ dominates column $x^{in}_{v_j}$ and $x^{out}_{v_j}$, ${x^*}^{in}_{v_j}=0$, and ${x^*}^{out}_{v_j}=M(v_j)$. This indicates row ${v_j}^{in}$ does not dominate column $x_{v_{j-1} v_j}$ because ${x^*}^{out}_{v_j}>0$ and $x^{out}_{v_j}$ is the least preferred column.

\item If $j = k$, by point 3 in Definition~\ref{def2.5}, if $v_k=t$ then row ${v_k}^{in}$ is not defined because there are no corresponding rows in $A$ for vertex $s$; if $v_{k-1} v_k \succ_{v_k} u v_k$ for some other edge $u v_k \in E$ and $f(u v_k) > 0$, then row ${v_k}^{in}$ does not dominate column $x_{v_{k-1} v_k}$ because it prefers column $x_{u v_k}$ more.
\end{enumerate}

From the aforementioned cases, either row ${v_j}^{in}$ is not defined or ${v_j}^{in}$ does not dominate column $x_{v_{j-1} v_j}$. We have a contradiction. Thus, by assigning $f(e) = {x^*}_e$ for each $e \in E$, feasibility and stability are guaranteed, we obtain a TL-SF.

TL-SF $\to$ Scarf:

Suppose $f$ is stable, first set ${x^*}_e = f(e)$ for each $e \in E$. If ${x^*}_e = c(e)$ then row $e$ dominates column $x_e$. The remaining is to assign ${x^*}^{in}_v$ and ${x^*}^{out}_v$ such that there are rows dominating column $x^{in}_v$, column $x^{out}_v$ where $v \in V$, and column $x_e$ where $e \in E$ is not saturated.

For $v_1 \in V$, if $v_1=s$ or there exists $v_2$ and $u$ both in $V$ such that $v_1 v_2 \succ_{v_1} v_1 u$, $f(v_1 u) > 0$, and $v_1 v_2$ is unsaturated, then column $x_{v_1 v_2}$ cannot be dominated by row $v_1 v_2$ and row ${v_1}^{out}$. This forces row ${v_2}^{in}$ to dominate column $x_{v_1 v_2}$. We should set ${x^*}^{in}_{v_2} = b_{{v_2}^{in}} - a_{v_2} \sum_{u v_2 \in E}{{x^*}_{uv_2}}$ and ${x^*}^{out}_{v_2} = 0$. Note that $v_1 v_2$ satisfies point 2 of Definition~\ref{def2.5}.

For $v_2$ and $v_3$ both in $V$, if $g_{v_2}$ follows equation~\ref{eq1}, $f_{in}(v_2)=0$ and $f(v_2 v_3) < b_{v_2}$, then $b_{{v_2}^{in}}=M(v_2)$ and $b_{{v_2}^{out}}=M(v_2)+b_{v_2}$. The only way to dominate column $x^{in}_{v_2}$ is to set ${x^*}^{in}_{v_2} = M(v_2)$ and ${x^*}^{out}_{v_2}=0$ so that row ${v_2}^{in}$ dominates column $x^{in}_{v_2}$.

We say vertices $v_2, v_2, ..., v_k$ are {\em inflow-dominated} if:

\begin{enumerate}
\item There is a walk $W = (v_1, v_2, ..., v_k)$ where $f(v_i v_{i+1}) < c(v_i v_{i+1})$ for $i=1, ..., k-1$.
\item Either one of the cases hold:
	\begin{enumerate}
		\item $v_1 v_2$ satisfies point 2 of Definition~\ref{def2.5}
		\item $f_{in}(v_2)=0$ and $f(v_2 v_3) < c(v_2 v_3)$.
	\end{enumerate}
\end{enumerate}

We know that ${x^*}^{in}_{v_2} = b_{{v_2}^{in}} - a_{v_2} \sum_{uv_2 \in E}{{x^*}_{uv_2}}$ ($\sum_{uv_2 \in E}{{x^*}_{uv_2}} = 0$ in case 2.(b)) and ${x^*}^{out}_{v_2} = 0$. Since $v_2 v_3$ is unsaturated, row $v_2 v_3$ cannot dominate column $x_{v_2 v_3}$. Row ${v_2}^{out}$ cannot dominate colulmn $x_{v_2 v_3}$ because ${x^*}^{in}_{v_2} > 0$ and column $x^{in}_{v_2}$ is the least preferred. This forces row ${v_3}^{in}$ to dominate column $x_{v_2 v_3}$. We should set ${x^*}^{in}_{v_3} = b_{{v_3}^{in}} - a_{v_3} \sum_{u v_3 \in E}{{x^*}_{u v_3}}$ and ${x^*}^{out}_{v_3} = 0$. By repeating anologous reasoning, for any inflow-dominated vertex $v_i$ where $i=2, ..., k$, we should set ${x^*}^{in}_{v_i} = b_{{v_i}^{in}} - a_{v_i} \sum_{uv_i \in E}{{x^*}_{uv_i}}$ and ${x^*}^{out}_{v_i} = 0$.

Similarly, we say vertices $v_1, v_2, ..., v_{k-1}$ are {\em outflow-dominated} if:

\begin{enumerate}
\item There is a walk $W = (v_1, v_2, ..., v_k)$ where $f(v_i v_{i+1}) < c(v_i v_{i+1})$ for $i=1, ..., k-1$.
\item Either one of the cases hold:
	\begin{enumerate}
		\item $v_{k-1} v_k$ satisfies point 3 of Definition~\ref{def2.5}
		\item $f_{out}(v_{k-1})=0$ and $f(v_{k-2} v_{k-1}) < c(v_{k-2} v_{k-1})$.
	\end{enumerate}
\end{enumerate}

By similar reasoning, for any outflow-dominated vertex $v_i$ where $i=1, ..., {k-1}$, we should set ${x^*}^{in}_{v_i} = 0$ and ${x^*}^{out}_{v_i} = b_{{v_i}^{out}} - \sum_{v_iu \in E}{{x^*}_{v_iu}}$.

No vertex in $V$ can be both inflow-dominated and outflow-dominated, otherwise there is a blocking walk which contradicts to Definition~\ref{def2.5}.

For vertex $v \in V$ that is neither inflow-dominated nor outflow-dominated, it is guaranteed that column $x_e$ where $e$ has $v$ as an endpoint is already dominated by some rows. We can arbitrarily assign non-negative values for ${x^*}^{in}_v$ and ${x^*}^{out}_v$ such that ${x^*}^{in}_v + {x^*}^{out}_v = \min\{b_{v^{in}} - a_v \sum_{uv \in E}{x^*}_{uv}, b_{v^{out}} - a_v \sum_{uv \in E}{x^*}_{uv}\}$.
\qed

\section{Proof of Theorem~\ref{thm5.1}}\label{app:thm5.1}

\begin{proof}
TL-Scarf $\to$ 3-layer-linear-SF:

Suppose $x^*$ is a solution of TL-Scarf, then $x^*$ dominates every column of $A$. We set $f'$ as the following:
\begin{enumerate}
\item For $e = uv \in E$, if $v \neq t$ set $f'(u_{out} m_e) = {x^*}_e$ and $f'(m_e v_{in}) = A_{v^{in} x_e}{x^*}_e$; if $v=t$ set $f'(u_{out} m_e) = f'(m_e v_{in}) = {x^*}_e$.
\item For each $v \in V$, set $f'(v_{out} m^{in}_v) = f'(m^{in}_v v_{in}) = {x^*}^{in}_v$ and $f'(v_{out} m^{out}_v) = f'(m^{out}_v v_{in}) = {x^*}^{out}_v$.
\item Set $f'(s' v_{out}) = \sum_{e = vu \in E}A_{v^{out} x_e}{x^*}_e + {x^*}^{in}_v + {x^*}^{out}_v = \sum_{e = vu \in E}{x^*}_e + {x^*}^{in}_v + {x^*}^{out}_v$.
\item Set $f'(v_{in} t') = \sum_{e = uv \in E}A_{v^{in} x_e}{x^*}_e + {x^*}^{in}_v + {x^*}^{out}_v$.
\item Set $f'(s' s_{out}) = \sum_{e = sv \in E}A_{v^{out} x_e}{x^*}_e = \sum_{e = sv \in E}{x^*}_e$.
\item Set $f'(t_{in} t') = \sum_{e = vt \in E}{x^*}_e$.
\end{enumerate}

We can see that feasibility is satisfied in the three-layer linear-network. The remaining is to show that $f'$ is stable. Assume there is a blocking walk $W$ in this network. Observing the structure of this three-layer network, each vertex $m_e$, $m^{in}_v$, or $m^{out}_v$ in the second layer only have one incoming and one outgoing edge. This indicates they cannot be the starting or ending vertex of the blocking walk. Besides, $W$ cannot start from $s'$ and end at a vertex in the first layer since each vertex in that layer has only one incoming edge. Similarly, $W$ cannot start from a vertex in the third layer and end at $t'$ as each vertex in the third layer has only one outgoing edge. The followings are the remaining cases of $W$. We show a proof by contradiction for each case.

\begin{enumerate}
\item $W$ starts from $s'$ and ends at a vertex $v_{in}$ in the third layer:

This means there exists a vertex $u_{out}$ in the first layer such that row $u^{out}$ does not bind since $s' u_{out}$ is not saturated. For $e=uv \in E$, $m_e v_{in}$ and $u_{out} m_e$ are not saturated and $v_{in}$ prefers $m_e v_{in}$ to some other nonzero incoming edges. This indicates column $x_e$ is dominated by neither row $e$ nor row $v^{in}$. Column $x_e$ is not dominated by $x^*$, a contradiction.

\item $W$ starts from a vertex $u_{out}$ in the first layer and ends at $t'$:

This means there exists a vertex $v_{in}$ in the third layer such that row $v^{in}$ does not bind since $v_{in} t'$ is not saturated. For $e=uv \in E$, $m_e v_{in}$ and $u_{out} m_e$ are not saturated and $u_{out}$ prefers $u_{out} m_e$ to some other nonzero outgoing edges. This indicates column $x_e$ is dominated by neither row $e$ nor row $u^{out}$. Column $x_e$ is not dominated by $x^*$, a contradiction.

\item $W$ starts from a vertex $u_{out}$ in the first layer and ends at a vertex $v_{in}$ in the third layer:

This means for $e=uv \in E$, $m_e v_{in}$ and $u_{out} m_e$ are not saturated so column $x_e$ is not dominated by row $e$. $u_{out}$ prefers $u_{out} m_e$ to some other nonzero outgoing edges. This indicates column $x_e$ is not dominated by row $u^{out}$. $v_{in}$ prefers $m_e v_{in}$ to some other nonzero incoming edges. This indicates column $x_e$ is not dominated by row $v^{in}$. Column $x_e$ is not dominated by $x^*$, a contradiction.

\item $W$ starts from $s'$ to $t'$:

This means there are some rows $v^{in}$ and $u^{out}$ that do not bind and for $e=uv \in E$, $m_e v_{in}$ and $u_{out} m_e$ are not saturated which indicates $x_e$ is not dominated by row $e$. Column $x_e$ is not dominated by $x^*$, a contradiction.

\end{enumerate}

3-layer-linear-SF $\to$ TL-Scarf:

Suppose $f'$ is a stable flow of the three-layer linear network. For $e=uv \in E$, if $m_e v_{in}$ and $u_{out} m_e$ are saturated, then ${x^*}_e = c'(u_{out} m_e) = b_e$. Column $x_e$ is dominated by row $e$. Otherwise, the following two cases cannot happen at the same time or there is a blocking walk $(u_{out}, m_e, v_{in})$.

\begin{enumerate}
\item $u_{out} = s_{out}$ or $u_{out}$ prefers $u_{out} m_e$ to some other nonzero outgoing edges.
\item $v_{in} = t_{in}$ or $v_{in}$ prefers $m_e v_{in}$ to some other nonzero incoming edges.
\end{enumerate}

If the first case happens, $v_{in}$ must prefer all nonzero incoming edges to $m_e v_{in}$ and $v_{in} t'$ must be saturated or the walk $(u_{out}, m_e, v_{in}, t')$ is blocking. This forces $f'(m^{in}_v v_{in}) = c'(v_{in} t') - \sum_{e' = wv_{in} \in E'}f(e')$ which corresponds to row $v^{in}$ binds and dominates column $x_e$.

If the second case happens, $u_{out}$ must prefer all nonzero outgoing edges to $u_{out} m_e$ and $s' u_{out}$ must be saturated or the walk $(s', u_{out}, m_e, v_{in})$ is blocking. This forces $f(u_{out} m^{out}_v) = c'(s' u_{out}) - \sum_{e' = u_{out}w \in E'} f(e')$ which corresponds to row $u^{out}$ binds and dominates column $x_e$.

The remaining is to show how $f'$ makes column $x^{in}_v$ and $x^{out}_v$ dominated for each $v \in V$. If both $s' v_{out}$ and $v_{in} t'$ are saturated, then both row $v^{in}$ and $v^{out}$ bind. We know that both walks $(v_{out}, m^{in}_v, v_{in})$ and $(v_{out}, m^{out}_v, v_{in})$ are not blocking. Therefore, both column $x^{in}_v$ and $x^{out}_v$ are dominated. It is not possible that both $s' v_{out}$ and $v_{in} t'$ are not saturated, otherwise $(s', v_{out}, m^{in}_v, v_{in}, t')$ or $(s', v_{out}, m^{out}_v, v_{in}, t')$ is a blocking walk. The remaining case is exactly one of $s' v_{out}$ and $v_{in} t'$ is saturated. 

If $s' v_{out}$ is not saturated, then by the setting of $c'(m^{in}_v v_{in})$ and $c'(v_{out} m^{in}_v)$, $m^{in}_v v_{in}$ and $v_{out} m^{out}_v$ cannot be saturated, this forces all the incoming edges of $v_{in}$ except $m^{in}_v v_{in}$ have to be zero. Otherwise, $(s, v_{out}, m^{in}_v, v_{in})$ is a blocking walk since $v_{in}$ prefers $m^1_v v_{in}$ the most. This also forces $v_{in} t'$ saturated otherwise $(s, v_{out}, m^{in}_v, v_{in}, t)$ is blocking. Consequently, $f(v_{out} m^{in}_v) = f(m^{in}_v v_{in}) = c'(v_{in} t')$ corresponds to row $v^{in}$ binds and dominates column $x^{in}_v$ and $x^{out}_v$. This can only happen when $c'(s' v_{out}) > c'(v_{in} t')$, i.e. $b_{v^{out}} > b_{v^{in}}$. Similarly, if $v_{in} t'$ is not saturated, then $f(v_{out} m^{in}_v) = f(m^{in}_v v_{in}) = c'(s' v_{out})$ corresponds to row $v^{out}$ binds and dominates column $x^{in}_v$ and $x^{out}_v$. This can only happen when $c'(s' v_{out}) < c'(v_{in} t')$, i.e. $b_{v^{out}} < b_{v^{in}}$.
\qed
\end{proof}

\section{Proof of Corrolary~\ref{cor5.1}}\label{app:cor5.1}

This is a direct result by combining section~\ref{subsubsec3.2.1} and Theorem~\ref{thm5.1}. We will use the same notaion as in section~\ref{subsubsec3.2.1} for the TL-network instance.

Given the TL-network $(G,s,t,c)$, $G=(V\cup\{s,t\},E)$, and $\succ_v$ for $v \in V$, construct the three-layer AL-network with $(G',s',t',c')$, $G'=(V'\cup\{s',t'\},E')$, and $\succ_{v'}$ for $v' \in V'$ as the following:

\begin{enumerate}
\item Create $s_{out} \in V'$ and $t_{in} \in V'$.
\item For each $v \in V$, create $v_{in}$, $v_{out}$, $m^{in}_v$, and $m^{out}_v$ in $V'$.
\item Let $M(v) = \max(\sum_{uv \in E} c(uv), \sum_{vu \in E} c(vu), b_v) + 1$.
\item If $g_v$ is in the form of equation~\ref{eq1}, set $c'(s' v_{in}) = M(v)$ and $c'(s' v_{out}) = M(v) + b_v$.
\item If $g_v$ is in the form of equation~\ref{eq2}, set $c'(s' v_{in}) = M(v) + b_v$ and $c'(s' v_{out}) = M(v)$.
\item Set $c'(s' s_{out}) = \sum_{su \in E} {c(su)} + 1$ and $c'(t_{in} t') = \sum_{ut \in E} {c(ut)} + 1$.
\item For each $e = uv \in E$, create $m_e \in E'$ and set $c'(u_{out} m_e) = c(u v)$ and $c'(m_e v_{in}) = a_v c(uv)$.
\item Set $c'(v_{out} m^{in}_v) = c'(v_{out} m^{out}_v) = c'(m^{in}_v v_{in}) = c'(m^{out}_v v_{in}) = M(v) + b_v$.
\item For each $v_{out} \in V'$, $v_{out}$ prefers $v_{out} m^{out}_v$ the most and $v_{out} m^{in}_v$ the least, the preference of $v_{out} m_e$ for some $e = vu \in E$ is the same as $\succ_v$.
\item For each $v_{in} \in V'$, $v_{in}$ prefers $m^{in}_v v_{in}$ the most and $m^{out}_v v_{in}$ the least, the preference of $m_e v_{in}$ for some $e = uv \in E$ is the same as $\succ_v$.
\item The preference of $s_{out}$ and $t_{in}$ is arbitrary.
\item Set $g_{s_{out}}(x)=x$ and $g_{t_{in}}(x)=x$. For each $v \in V$, set $g_{v_{in}}(x)=x$, $g_{v_{out}}(x)=x$, $g_{m^1_v}(x)=x$, and $g_{m^2_v}(x)=x$.
\item For each $e=uv \in E$, set $g_{m_e}(x) = a_v x$ if $v \neq t'$. If $v=t$, set $g_{m_e}(x)=x$.
\end{enumerate}

Suppose $f'$ is an AL-SF of the three-layer AL-network, to obtain the TL-SF $f$ for the old TL-network, for each $e=uv \in E$, set $f(uv) = f'(u_{out} m_e)$.
\qed

\section{Examples}
\begin{example} \label{ex2}
Consider the following TL-network. On each edge, the upper number  represents the flow value and the lower number correspond to the capacity.
\mbox{}\\
\begin{center}
\begin{tikzpicture}
	\node[vertex] (a1) at (0,0) {$s$}; 
	\node[vertex] (b1) at (3,1.5) {$u$}; 
	\node[vertex] (c1) at (3,-1.5) {$v$};
  \node[vertex] (d1) at (6,0) {$t$};
	\path[->]
		(a1) edge node [above] {$1/1$} (b1)
		(b1) edge  [bend right=45,looseness=0.5] node [left] {$0/1$} (c1)
		(c1) edge  [bend right=45,looseness=0.5] node [right] {$0/2$} (b1)
        (b1) edge node [above] {$1/11$} (d1);
	\node at (3,2.5) {$uv \succ_u ut$ and $vu \succ_u su$};		\node at (10,1.5) {$g_u(x) = \begin{cases} 
      0 & \text{if } x<0.5, \\
      2 x - 1 & \text{otherwise.} \\
   \end{cases}$};
	\node at (10,-1) {$g_v(x) = \begin{cases} 
      [0,2] & \text{if } x=0, \\
      3x + 2 & \text{otherwise.} \\
   \end{cases}$};
\end{tikzpicture}
\end{center}

$u$ prefers any incoming or outgoing edges that have $v$ as an endpoint. The flow assignment given in the picture is feasible but not stable.   There are two blocking walks: $W_1=(u,v,u)$ and $W_2=(u,v,u,t)$. There exists a vector $V_{W_1}=(0,2)$ that can be rerouted along $W_1$. There exists a vector $V_{W_2}=(0,2,4)$ that can be rerouted along $W_2$.

The following flow assignment is stable:
\mbox{}\\
\begin{center}
\begin{tikzpicture}
	\node[vertex] (a1) at (0,0) {$s$}; 
	\node[vertex] (b1) at (3,1.5) {$u$}; 
	\node[vertex] (c1) at (3,-1.5) {$v$};
  \node[vertex] (d1) at (6,0) {$t$};
	\path[->]
		(a1) edge node [above] {$1/1$} (b1)
		(b1) edge  [bend right=45,looseness=0.5] node [left] {$0/1$} (c1)
		(c1) edge  [bend right=45,looseness=0.5] node [right] {$2/2$} (b1)
        (b1) edge node [above] {$5/11$} (d1);
	\node at (3,2.5) {$uv \succ_u ut$ and $vu \succ_u su$};		\node at (10,1.5) {$g_u(x) = \begin{cases} 
      0 & \text{if } x<0.5, \\
      2 x - 1 & \text{otherwise.} \\
   \end{cases}$};
	\node at (10,-1) {$g_v(x) = \begin{cases} 
      [0,2] & \text{if } x=0, \\
      3x + 2 & \text{otherwise.} \\
   \end{cases}$};
\end{tikzpicture}
\end{center}

\end{example}

\begin{example} \label{ex3}
Consider the following TL-network:
\mbox{}\\
\begin{center}
\begin{tikzpicture}
	\node[vertex] (s) at (0,0) {$s$}; 
	\node[vertex] (u) at (4,0) {$u$}; 
	\node[vertex] (v) at (6,1.5) {$v$};
	\node[vertex] (w) at (8,0) {$w$};
    \node[vertex] (t) at (12,0) {$t$};
	\path[->]
		(s) edge node [above] {$0/3$} (u)
		(u) edge node [left] {$1/3$} (v)
		(v) edge node [right] {$1/3$} (w)
		(u) edge node [above] {$2/3$} (w)
		(w) edge node [above] {$10/10$} (t)
        ;
	\node at (4,-1) {$uv \succ_u uw$};
	\node at (8,-1) {$uw \succ_w vw$};
	\node at (6,-2.5) {$g_u(x) = \begin{cases} 
      [0,4] & \text{if } x=0, \\
      x + 4 & \text{otherwise.} \\
   \end{cases}$,
$g_v(x) = \begin{cases} 
      0 & \text{if } x<0.5, \\
      2 x - 1 & \text{otherwise.} \\
   \end{cases}$,
$g_w(x) = \begin{cases} 
      [0,1] & \text{if } x=0, \\
      3x + 1 & \text{otherwise.} \\
   \end{cases}$};
\end{tikzpicture}
\end{center}

The corresponding Scarf's solution is the following where the blanks are zeros:
{\small
\[
A x^* = 
\begin{blockarray}{cccccccccccc}
& x_{su} & x_{uv} & x_{vw} & x_{uw} & x_{wt} & x^{in}_u & x^{out}_u & x^{in}_v & x^{out}_v & x^{in}_w & x^{out}_w \\ 
\begin{block}{c(ccccc|cc|cc|cc)}
su & 1 &  &  &  &  &  &  &  &  &  &  \\ 
uv &  & 1 &  &  &  &  &  &  &  &  &  \\
vw &  &  & 1 &  &  &  &  &  &  &  &  \\
uw &  &  &  & 1 &  &  &  &  &  &  &  \\
wt &  &  &  &  & 1 &  &  &  &  &  &  \\
\cline{2-12}
u^{in} & 1 &  &  &  &  & 1 & 1 &  &  &  &  \\
u^{out} &  & 1 &  & 1 &  & 1 & 1 &  &  &  &  \\ 
\cline{2-12}
v^{in} &  & 2 &  &  &  &  &  & 1 & 1 &  &  \\
v^{out} &  &  & 1 &  &  &  &  & 1 & 1 &  &  \\
\cline{2-12}
w^{in} &  &  & 3 & 3 &  &  & & &  & 1 & 1 \\
w^{out} &  &  &  &  & 1 &  & & &  & 1 & 1 \\
\end{block}
\end{blockarray}
\quad
\begin{blockarray}{c}
\\ 
\begin{block}{(c)}
\textcolor{red}{0} \\ \textcolor{red}{1} \\ \textcolor{red}{1} \\ \textcolor{red}{2} \\ \textcolor{red}{10} \\ 7 \\ 0 \\ 3 \\ 0 \\ 2 \\ 0 \\
\end{block}
\end{blockarray}
\leqslant
\begin{blockarray}{c}
\\ 
\begin{block}{(c)}
3 \\ 3 \\ 3 \\ 3 \\ 10 \\ 7 \\ 11 \\ 5 \\ 4 \\ 11 \\ 12 \\
\end{block}
\end{blockarray}
=
\begin{blockarray}{c}
\\ 
\begin{block}{(c)}
b_{su} \\ b_{uv} \\ b_{vw} \\ b_{uw} \\ b_{wt} \\ b_{u^{in}} \\ b_{u^{out}} \\ b_{v^{in}} \\ b_{v^{out}} \\ b_{w^{in}} \\ b_{w^{out}} \\
\end{block}
\end{blockarray}
\]
}
The following table is the rows ranking over columns from the highest to the lowest:
\mbox{}\\

\begin{tabular}{ | l | l | l | l |}
    \hline
    row & $u^{in}$ & $u^{out}$ & $v^{in}$ \\ \hline
    column & $x^{in}_u, x_{su}, x^{out}_u$
    	& $x^{out}_u, x_{uv}, x_{uw}, x^{in}_u$
        & $x^{in}_v, x_{uv}, x^{out}_v$
\\ \hline
  	\end{tabular}
\mbox{}\\

\begin{tabular}{ | l | l | l | l |}
    \hline
    row  & $v^{out}$ & $w^{in}$ & $w^{out}$ \\ \hline
    column & $x^{out}_v, x_{vw}, x^{in}_v$
        & $x^{in}_w, x_{uw}, x_{vw}, x^{out}_w$
        & $x^{out}_w, x_{wt}, x^{in}_w$
\\ \hline
  	\end{tabular}
\mbox{}\\

By the aforementioned construction, $M(u)=7$, $M(v)=4$, and $M(w)=11$. Clearly, the flow capcity constraints are satisfied and only edge $wt$ is saturated which corresponds to the only capacity binding constraint row $wt$. Flow feasibility is also satisfied. We can see that row $u^{out}$ does not bind and $f_{out}(u)=3 < 4$ does not reach the truncated threshold. For other vertices, the inflow and outflow constraints all bind.

For stability, since row $su$ does not dominate column $x_{su}$, row $u^{in}$ is the only choice to dominate column $x_{su}$. By the setting of $M(u)$, one of ${x^*}^{in}_u$ and ${x^*}^{out}_u$ must be positive, this forces ${x^*}^{in}_u = 7$ and ${x^*}^{out}_u = 0$. Since row $u^{out}$ prefers column $x^{in}_u$ the least, row $v^{in}$ must dominate column $x_{uv}$ and row $w^{in}$ must dominate column $x_{uw}$. Therefore, ${x^*}^{in}_v > 0$ and ${x^*}^{out}_v = 0$, and ${x^*}^{in}_w > 0$ and ${x^*}^{out}_w = 0$. Column $x_{vw}$ is less preferred than column $x_{uw}$ and column $x^{in}_w$ by row $w^{in}$. This ensures that $x^*$ dominates all the columns of $A$.

\end{example}

\begin{example} \label{ex4}
\mbox{}\\
\begin{center}
\begin{tikzpicture}
      \draw[->] (-0.7*1.3,2.5*1.3) -- (3*1.3,2.5*1.3) node[right] {$x$};
      \draw[->] (-0.5*1.3,2.3*1.3) -- (-0.5*1.3,5.5*1.3) node[above] {$g_v(x)$};
      \node at (0.5*1.3,2.3*1.3) {$c_1$};
      \node at (1.5*1.3,2.3*1.3) {$c_2$};
      \node at (0,5.2*1.3) {$g_{v,1}$};
      \node at (1*1.3,5.2*1.3) {$g_{v,2}$};
      \node at (2*1.3,5.2*1.3) {$g_{v,3}$};
      \draw[domain=0:0.5,smooth,variable=\x,blue] plot (-0.5*1.3,{(\x+2.5)*1.3}) [domain=-0.5:0.5,smooth,variable=\x,blue] plot({\x*1.3},{(0.25*(\x+0.5)+0.5+2.5)*1.3});
    \draw[domain=0.5:1.5,smooth,variable=\x,blue] plot({\x*1.3},{(1.2*(\x-0.5)+0.75+2.5)*1.3});
    \draw[domain=1.5:2.2,smooth,variable=\x,blue] plot({\x*1.3},{(1.95+2.5)*1.3})
[domain=2.2:3,smooth,variable=\x,blue] plot({\x*1.3},{(0.7*(\x-2.2)+1.95+2.5)*1.3});
	  \draw[domain=0:3,dotted,variable=\x,red] plot (0.5*1.3,{(\x+2.5)*1.3});
      \draw[domain=0:3,dotted,variable=\x,red] plot (1.5*1.3,{(\x+2.5)*1.3});
      \node at (5*1.3,5.3*1.3) {PL-network:};
      \node [vertex] (v) at (7*1.3,4*1.3) {$v$};
      \node [vertex] (a) at (5*1.3,4.5*1.3) {$u$};
      \node [vertex] (b) at (5*1.3,3.5*1.3) {$w$};
      \node [vertex] (c) at (9*1.3,4.5*1.3) {$y$};
      \node [vertex] (d) at (9*1.3,3.5*1.3) {$z$};
      \path[->]
    	(a) edge node [above, pos=0.85] {$1^{\mbox{st}}$} (v)
        (b) edge node [below, pos=0.85] {$2^{\mbox{nd}}$} (v)
        (v) edge node [above, pos=0.15] {$1^{\mbox{st}}$} (c)
        (v) edge node [below, pos=0.15] {$2^{\mbox{nd}}$} (d)
        ;
      \node at (-1,1*1.3-1) {TL-network:};
{\footnotesize
      \node[vertex] (vin) at (3*1.3,1*1.3-1) {$v_{in}$};
      \node[vertex] (vout) at (7*1.3,1*1.3-1) {$v_{out}$};
      \node[vertex] (aout) at (1*1.3,1.5*1.3-1) {$u_{out}$};
      \node[vertex] (bout) at (1*1.3,0.5*1.3-1) {$w_{out}$};
      \node[vertex] (cin) at (9*1.3,1.5*1.3-1) {$y_{in}$};
      \node[vertex] (din) at (9*1.3,0.5*1.3-1) {$z_{in}$};
      \node[vertex] (v1) at (5*1.3,2*1.3-1) {$v_1$};
      \node[vertex] (v2) at (5*1.3,1*1.3-1) {$v_2$};
      \node[vertex] (v3) at (5*1.3,0-1) {$v_3$};
      \path[->]
      	(aout) edge node [above, pos=0.9] {$1^{\mbox{st}}$} (vin)
        (bout) edge node [below, pos=0.9] {$2^{\mbox{nd}}$} (vin)
        (vout) edge node [above, pos=0.1] {$1^{\mbox{st}}$} (cin)
        (vout) edge node [below, pos=0.1] {$2^{\mbox{nd}}$} (din)
      ;
      \node at (5*1.3,2.5*1.3-1) {$g_{v,1}$};
      \node at (5*1.3,1.5*1.3-1) {$g_{v,2}$};
      \node at (5*1.3,0.5*1.3-1) {$g_{v,3}$};
      \path[->]
    	(vin) edge [bend left=45,looseness=0.5] node [above, pos=0.15] {$1^{\mbox{st}}$} (v1)
        (vin) edge node [above, pos=0.15] {$2^{\mbox{nd}}$} (v2)
        (vin) edge [bend right=45,looseness=0.5] node [below, pos=0.15] {$3^{\mbox{rd}}$} (v3)
        (v1) edge [bend left=45,looseness=0.5] node [above, pos=0.85] {$1^{\mbox{st}}$} (vout)
        (v2) edge node [above, pos=0.85] {$2^{\mbox{nd}}$} (vout)
        (v3) edge [bend right=45,looseness=0.5] node [below, pos=0.85] {$3^{\mbox{rd}}$} (vout)
        ;
}
    \end{tikzpicture}
\end{center}

In this example, we focus on vertex $v \in V$. $uv$, $wv$, $vy$, and $vz$ are all in $E$. We split each vertex and connect the out vertices to the in vertices, so $u_{out} v_{in}$, $w_{out} v_{in}$, $v_{out} y_{in}$, and $v_{out} z_{in}$ are in the new TL-network. In the original PL-network, $uv \succ_v wv$ and $vy \succ_v vz$, so $u_{out} v_{in} \succ_{v_{in}} w_{out} v_{in}$ and $v_{out} y_{in} \succ_{v_{out}} v_{out} z_{in}$ in the new TL-network. Besides, we set proper capacities and prioritize the preference for $v_{in} v_i$ and $v_i v_{out}$ for $i=1,2,3$ such that $f_{out}(v)$ meets the expected flow value $g_v(f_{in}(v))$.
\end{example}

\begin{example} \label{ex9}
Consider the following TL-network with its stable flow assignment:
\mbox{}\\
\begin{center}
\begin{tikzpicture}
	\node[vertex] (a1) at (0,0) {$s$}; 
	\node[vertex] (b1) at (3,1.5) {$u$}; 
	\node[vertex] (c1) at (3,-1.5) {$v$};
  \node[vertex] (d1) at (6,0) {$t$};
	\path[->]
		(a1) edge node [above] {$0/1$} (b1)
		(b1) edge  [bend right=45,looseness=0.5] node [left] {$2/3$} (c1)
		(c1) edge  [bend right=45,looseness=0.5] node [right] {$7/7$} (b1)
        (b1) edge node [above] {$11/11$} (d1);
	\node at (3,2.5) {$uv \succ_u ut$ and $vu \succ_u su$};		\node at (10,1.5) {$g_u(x) = \begin{cases} 
      0 & \text{if } x<0.5, \\
      2 x - 1 & \text{otherwise.} \\
   \end{cases}$};
	\node at (10,-1) {$g_v(x) = \begin{cases} 
      [0,1] & \text{if } x=0, \\
      3x + 1 & \text{otherwise.} \\
   \end{cases}$};
\end{tikzpicture}
\end{center}

By the construction in section~\ref{subsubsec3.2.1}, we have the following TL-Scarf:

\[
A x^* = 
\begin{blockarray}{ccccccccc}
& x_{su} & x_{uv} & x_{vu} & x_{ut} & x^{in}_u & x^{out}_u & x^{in}_v & x^{out}_v \\ 
\begin{block}{c(cccc|cc|cc)}
su & 1 &  &  &  &  &  &  &  \\ 
uv &  & 1 &  &  &  &  &  &  \\
vu &  &  & 1 &  &  &  &  &  \\
ut &  &  &  & 1 &  &  &  &  \\
\cline{2-9}
u^{in} & 2 &  & 2 &  & 1 & 1 &  &  \\
u^{out} &  & 1 &  & 1 & 1 & 1 &  &  \\ 
\cline{2-9}
v^{in} &  & 3 &  &  &  &  & 1 & 1 \\
v^{out} &  &  & 1 &  &  &  & 1 & 1 \\
\end{block}
\end{blockarray}
\quad
\begin{blockarray}{c}
\\ 
\begin{block}{(c)}
0 \\ 2 \\ 7 \\ 11 \\ 2 \\ 0 \\ 2 \\ 0 \\ 
\end{block}
\end{blockarray}
=
\begin{blockarray}{c}
\\ 
\begin{block}{(c)}
0 \\ 2 \\ 7 \\ 11 \\ 16 \\ 15 \\ 8 \\ 9 \\
\end{block}
\end{blockarray}
\leqslant
\begin{blockarray}{c}
\\ 
\begin{block}{(c)}
1 \\ 3 \\ 7 \\ 11 \\ 16 \\ 15 \\ 8 \\ 9 \\
\end{block}
\end{blockarray}
=
\begin{blockarray}{c}
\\ 
\begin{block}{(c)}
b_{su} \\ b_{uv} \\ b_{vu} \\ b_{ut} \\ b_{u^{in}} \\ b_{u^{out}} \\ b_{v^{in}} \\ b_{v^{out}} \\
\end{block}
\end{blockarray}
\]

Where the preference of each row over columns from the highest to the lowest is:
\mbox{}\\

\begin{tabular}{ | l | l | l | l | l |}
    \hline
    row & $u^{in}$ & $u^{out}$ & $v^{in}$ & $v^{out}$ \\ \hline
    column & $x^{in}_u, x_{vu} x_{su}, x^{out}_u$
    	& $x^{out}_u, x_{uv}, x_{ut}, x^{in}_u$
        & $x^{in}_v, x_{uv}, x^{out}_v$
			& $x^{out}_v, x_{vu}, x^{in}_v$
\\ \hline
  	\end{tabular}
\mbox{}\\

By the aforementioned construction, we obtain the corresponding three-layer linear network and the preference of each vertex as the following:

\mbox{}\\
\begin{tabular}{ | l | l |}
    \hline
    vertex & preference \\ \hline
    $u_{out}$ & $u_{out}m^{out}_u \succ_{u_{out}} u_{out}m_{uv} \succ_{u_{out}} u_{out}m_{ut} \succ_{u_{out}} u_{out}m^{in}_u$ \\ \hline
    $u_{in}$ & $m^{in}_u u_{in} \succ_{u_{in}} m_{vu}u_{in} \succ_{u_{in}} m_{su}u_{in} \succ_{u_{in}} m^{out}_u u_{in}$ \\ \hline
    $v_{out}$ & $v_{out}m^{out}_v \succ_{v_{out}} v_{out}m_{vu} \succ_{v_{out}} v_{out}m^{in}_v$ \\ \hline
    $v_{in}$ & $m^{in}_v v_{in} \succ_{v_{in}} m_{uv}v_{in} \succ_{v_{in}} m^{out}_v v_{in}$ \\ \hline
  	\end{tabular}

\mbox{}\\

\begin{center}
\begin{tikzpicture}
	\node[vertex] (s) at (0,0) {$s'$};
	\node[vertex] (sout) at (3,5) {$s_{out}$}; 
	\node[vertex] (uout) at (3,0) {$u_{out}$};
  \node[vertex] (vout) at (3,-5) {$v_{out}$};
	\node[vertex] (tin) at (10.5,5) {$t_{in}$}; 
	\node[vertex] (uin) at (10.5,0) {$u_{in}$};
  \node[vertex] (vin) at (10.5,-5) {$v_{in}$};
	\node[vertex] (t) at (13.5,0) {$t'$};
	\node[vertex] (msu) at (8,3.5) {$m_{su}$};
	\node[vertex] (muv) at (8,-2) {$m_{uv}$};
	\node[vertex] (mut) at (5.5,3.5) {$m_{ut}$};
	\node[vertex] (mvu) at (5.5,-2) {$m_{vu}$};
	\node[vertex] (m1u) at (6.75,2) {$m^{in}_u$};
	\node[vertex] (m2u) at (6.75,0) {$m^{out}_u$};
	\node[vertex] (m1v) at (6.75,-3) {$m^{in}_v$};
	\node[vertex] (m2v) at (6.75,-5) {$m^{out}_v$};
	\path[->]
		(s) edge node [left] {$0/2$} (sout)
		(s) edge node [above] {$15/15$} (uout)
		(s) edge node [left] {$9/9$} (vout)
		(sout) edge node [above] {$\textcolor{red}{0}/1$} (msu)
		(msu) edge node [right] {$0/2$} (uin)
		(uout) edge node [left] {$\textcolor{red}{11}/11$} (mut)
		(mut) edge node [above] {$11/11$} (tin)
		(uout) edge node [above] {$\textcolor{red}{2}/16$} (m1u)
		(m1u) edge node [above] {$2/16$} (uin)
		(uout) edge node [above] {$\textcolor{red}{0}/16$} (m2u)
		(m2u) edge node [above] {$0/16$} (uin)
		(uout) edge node [above] {$\textcolor{red}{2}/3$} (muv)
		(muv) edge node [right] {$6/9$} (vin)
		(vout) edge node [left] {$\textcolor{red}{7}/7$} (mvu)
		(mvu) edge node [above] {$14/14$} (uin)
		(vout) edge node [above] {$\textcolor{red}{2}/9$} (m1v)
		(m1v) edge node [above] {$2/9$} (vin)
		(vout) edge node [above] {$\textcolor{red}{0}/9$} (m2v)
		(m2v) edge node [above] {$0/9$} (vin)
		(tin) edge node [right] {$11/12$} (t)
		(uin) edge node [above] {$16/16$} (t)
		(vin) edge node [right] {$8/8$} (t);
\end{tikzpicture}
\end{center}

The linear mapping of each vertex $v$ in the three-layer linear network is:
\mbox{}\\

\begin{tabular}{ | l | l | l | l | l |}
    \hline
    $v$ & $m_{su}$ & $m_{vu}$ & $m_{uv}$ & others\\ \hline
    $g_v(x)$ & $2x$ & $2x$ & $3x$ & x\\ \hline
  	\end{tabular}
\mbox{}\\

Note that for vertices not listed in the table, the inflow is equal to the outflow.

We can see that the three-layer linear network captures the properties of the TL-Scarf. The capacities of edges adjacent to $s'$ and $t'$ are set to large values such that $s' s_{out}$ and $t_{in} t$ cannot be saturated, and for $u$ in the original TL-network, there must be some flow passing $m^{in}_u$ or $m^{out}_u$, similar arguement holds for $v$. The preference of each vertex simply follows the row preference in TL-Scarf. The linear mapping of each vertex is set in a way that follows the coefficient given in TL-Scarf.
\end{example}

\begin{example} \label{ex10}
Consider the TL-SF instance in Example~\ref{ex9}.

\mbox{}\\
\begin{center}
\begin{tikzpicture}
	\node[vertex] (a1) at (0,0) {$s$}; 
	\node[vertex] (b1) at (3,1.5) {$u$}; 
	\node[vertex] (c1) at (3,-1.5) {$v$};
  \node[vertex] (d1) at (6,0) {$t$};
	\path[->]
		(a1) edge node [above] {$0/1$} (b1)
		(b1) edge  [bend right=45,looseness=0.5] node [left] {$2/3$} (c1)
		(c1) edge  [bend right=45,looseness=0.5] node [right] {$7/7$} (b1)
        (b1) edge node [above] {$11/11$} (d1);
	\node at (3,2.5) {$uv \succ_u ut$ and $vu \succ_u su$};		\node at (10,1.5) {$g_u(x) = \begin{cases} 
      0 & \text{if } x<0.5, \\
      2 x - 1 & \text{otherwise.} \\
   \end{cases}$};
	\node at (10,-1) {$g_v(x) = \begin{cases} 
      [0,1] & \text{if } x=0, \\
      3x + 1 & \text{otherwise.} \\
   \end{cases}$};
\end{tikzpicture}
\end{center}

The left figure is the corresponding GSA instance obtained by Corollary~\ref{cor5.3} with edge capacities in fraction. The right figure is the GSA solution.
\mbox{}\\

\begin{center}
\begin{tikzpicture}
	\node[vertex] (is) at (0,3) {$i_s$}; 
	\node[vertex] (iu) at (0,0) {$i_u$};
  \node[vertex] (iv) at (0,-3) {$i_v$};
	\node[vertex] (jt) at (6,3) {$j_t$}; 
	\node[vertex] (ju) at (6,0) {$j_u$};
  \node[vertex] (jv) at (6,-3) {$j_v$};
	\path[-]
		(is) edge node [above, pos=0.3] {$1/2$} (ju)
		(iu) edge node [below, pos=0.3] {$3/15$} (jv)
		(iu) edge node [above, pos=0.7] {$11/15$} (jt)
		(iv) edge node [below, pos=0.7] {$7/9$} (ju);
	\path[->]
		(iu) edge [bend right=15,looseness=0.5] node [below] {$16/15$} (ju)
		(ju) edge [bend right=15,looseness=0.5] node [above] {$16/15$} (iu)
		(iv) edge [bend right=15,looseness=0.5] node [below] {$9/9$} (jv)
		(jv) edge [bend right=15,looseness=0.5] node [above] {$9/9$} (iv);
	\node[vertex] (iis) at (9,3) {$i_s$}; 
	\node[vertex] (iiu) at (9,0) {$i_u$};
  \node[vertex] (iiv) at (9,-3) {$i_v$};
	\node[vertex] (jjt) at (15,3) {$j_t$}; 
	\node[vertex] (jju) at (15,0) {$j_u$};
  \node[vertex] (jjv) at (15,-3) {$j_v$};
	\path[-]
		(iis) edge node [above, pos=0.3] {$\textcolor{red}{0}/2$} (jju)
		(iiu) edge node [below, pos=0.3] {$\textcolor{red}{2}/15$} (jjv)
		(iiu) edge node [above, pos=0.7] {$\textcolor{red}{11}/15$} (jjt)
		(iiv) edge node [below, pos=0.7] {$\textcolor{red}{7}/9$} (jju);
	\path[->]
		(iiu) edge [bend right=15,looseness=0.5] node [below] {$0/15$} (jju)
		(jju) edge [bend right=15,looseness=0.5] node [above] {$2/15$} (iiu)
		(iiv) edge [bend right=15,looseness=0.5] node [below] {$2/9$} (jjv)
		(jjv) edge [bend right=15,looseness=0.5] node [above] {$0/9$} (iiv);
\end{tikzpicture}
\end{center}

The tables of multipliers and preferences are:
\mbox{}\\

\begin{table}[!htb]
    \begin{subtable}{.5\linewidth}
      \centering
        \begin{tabular}{ | l | l | l | l | l | l | l | l | l |}
    \hline
    $i$ & $i_s$ & $i_u$ & $i_u$ & $i_v$ & $i_u$ & $j_u$ & $i_v$ & $j_v$\\ \hline
    $j$ & $j_u$ & $j_v$ & $j_t$ & $j_u$ & $j_u$ & $i_u$ & $j_v$ & $i_v$\\ \hline
    $\mu(ij)$ & $\frac{4}{16}$ & $\frac{45}{8}$ & $\frac{15}{12}$ & $\frac{18}{16}$ & $\frac{15}{16}$ & $\frac{15}{16}$ & $\frac{9}{8}$ & $\frac{9}{8}$\\ \hline
  	\end{tabular}
    \end{subtable}%
    \begin{subtable}{.5\linewidth}
      \centering
        \begin{tabular}{ | l | l |}
    \hline
    $i$/$j$ & preference \\ \hline
    $i_u$ & $i_u j_u \succ_{i_u} i_u j_v \succ_{i_u} i_u j_t \succ_{i_u} j_u i_u$ \\ \hline
    $j_u$ & $j_u i_u \succ_{j_u} i_v j_u \succ_{j_u} i_s j_u \succ_{j_u} i_u j_u$ \\ \hline
    $i_v$ & $i_v j_v \succ_{i_v} i_v j_u \succ_{i_v} j_v i_v$ \\ \hline
    $j_v$ & $j_v i_v \succ_{j_v} i_u j_v \succ_{j_v} i_v j_v$ \\ \hline
  	\end{tabular}
    \end{subtable} 
\end{table}

We can see that not only equation~\ref{eq7}, \ref{eq8}, and \ref{eq9} but also all the conditions in Definition~\ref{def5.2} are satisfied. The numerator of the stable allocation corresponds to the stable flow.

\end{example}

\end{document}